\newcommand{\Z}{\mathbb{Z}}
\newcommand{\fc}{\mathfrak{c}}
\newcommand{\ci}{\mathrm{i}}
\newcommand{\CC}{\mathbb{C}}
\newcommand{\ZZ}{\mathbb{Z}}
\newcommand{\DR}{\mathrm{DR}}
\newcommand{\cA}{\mathcal{A}}
\newcommand{\cC}{\mathcal{C}}
\newcommand{\cL}{\mathcal{L}}
\newcommand{\cM}{\mathcal{M}}
\newcommand{\sS}{\mathcal{S}}
\newcommand{\Ch}{\mathrm{Ch}}
\newcommand{\Td}{\mathrm{Td}}
\newcommand{\st}{\mathbf{H}}
\newcommand{\cvir}{c_{\mathrm{vir}}}
\newcommand{\degvir}{\mathrm{degvir}}
\newcommand{\mbZ}{\mathbb Z}
\newcommand{\mbC}{\mathbb C}
\newcommand{\mbQ}{\mathbb Q}
\def\CP1{\mathbb{C}\mathrm{P}^1}
\newcommand{\oM}{\overline{\mathcal M}}
\newcommand{\oh}{{\overline h}}
\newcommand{\og}{\overline g}
\newcommand{\eps}{\varepsilon}
\def\d{\partial}
\newcommand{\M}{\mathcal M}
\newcommand{\tu}{\widetilde u}
\newcommand{\hcA}{{\widehat{\mathcal A}}}
\newcommand{\hLambda}{{\widehat\Lambda}}
\renewcommand{\deg}{\mathop{\mathrm{deg}}\nolimits}
\newcommand{\res}{\mathop{\mathrm{res}}\nolimits}
\newcommand{\spec}{u^{\mathrm{sp}}}
\def\DR{{\rm DR}}
\newcommand{\even}{\mathrm{even}}
\newcommand{\ct}{\mathrm{ct}}
\newcommand{\GD}{\mathrm{GD}}
\newcommand{\str}{\mathrm{str}}
\newcommand{\wtop}{w^{\mathrm{top}}}
\newcommand{\fW}{\mathfrak{W}}
\newcommand{\tp}{\widetilde p}
\newtheorem{theorem}{Theorem}[section]
\newtheorem{thm}{Theorem}[section]
\newtheorem{proposition}[theorem]{Proposition}
\theoremstyle{definition}
\newtheorem{dfns}[theorem]{Definitions}
\newtheorem{rem}[theorem]{Remark}
\numberwithin{equation}{section}
\title[Double ramification hierarchy for Witten's $r$-spin class]{Towards a description of the double ramification hierarchy for Witten's $r$-spin class}
\author{Alexandr Buryak}
\address{A.~Buryak:\newline Department of Mathematics, ETH Zurich, \newline Ramistrasse 101 8092, HG G 27.1, Zurich, Switzerland}
\email{buryaksh@gmail.com}
\author{J\'er\'emy Gu\'er\'e}
\address{J.~Gu\'er\'e:\newline Humboldt Universit\"at,\newline Unter den Linden 6, 10099 Berlin, Germany}
\email{jeremy.guere@hu-berlin.de}
\begin{document}
\begin{abstract}
The double ramification hierarchy is a new integrable hierarchy of hamiltonian PDEs introduced recently by the first author. It is associated to an arbitrary given cohomological field theory. In this paper we study the double ramification hierarchy associated to the cohomological field theory formed by Witten's $r$-spin classes. Using the formula for the product of the top Chern class of the Hodge bundle with Witten's class, found by the second author, we present an effective method for a computation of the double ramification hierarchy. We do explicit computations for $r=3,4,5$ and prove that the double ramification hierarchy is Miura equivalent to the corresponding Dubrovin--Zhang hierarchy. As an application, this result together with a recent work of the first author with Paolo Rossi gives a quantization of the $r$-th Gelfand--Dickey hierarchy for $r=3,4,5$. 
\end{abstract}
\maketitle
\tableofcontents

\section{Introduction}

Integrable hierarchies play a major role in the study of cohomological field theories, such as Gromov--Witten theory of a projective variety or Fan--Jarvis--Ruan--Witten theory of an isolated singularity. A striking example is a conjecture of Witten~\cite{Wittenconj}, proved by Kontsevich~\cite{Kontsevich}, claiming that the partition function of the so-called trivial cohomological field theory (i.e.~the Gromov--Witten theory of a point) is a $\tau$-function of the KdV hierarchy. This gives recursion formulas to compute any intersection numbers involving psi-classes on the moduli space of stable algebraic curves~$\oM_{g,n}$.

In a recent paper \cite{Bur14}, a new integrable hierarchy of partial differential equations associated to a given cohomological field theory was introduced. The construction was inspired by ideas from Symplectic Field Theory~\cite{EGH00} and it makes use of the intersection numbers of the given cohomological field theory with the double ramification cycle, the top Chern class of the Hodge bundle and the psi-classes on the moduli space~$\oM_{g,n}$. 

In~\cite{Bur14} the author conjectured, guided by the examples of the trivial and the Hodge cohomological field theories (which give the KdV hierarchy and the hierarchy of the Intermediate Long Wave equation respectively) that the double ramification hierarchy is Miura equivalent to the Dubrovin--Zhang hierarchy associated to the same cohomological field theory via the construction described, for instance, in~\cite{DZ05} (see also~\cite{BPS12a,BPS12b}). In~\cite{BR14} the authors proved that the conjecture is true in the case of the cohomological field theory associated to the Gromov--Witten theory of the complex projective line.

In this paper we focus on the cohomological field theory formed by Witten's $r$-spin classes, the so-called $r$-spin theory. Originally, it was introduced in the context of the generalized Witten conjecture \cite{Witten2}, claiming that its partition function is a $\tau$-function of the $r$-th Gelfand--Dickey hierarchy. The $r$-spin theory was then developed by several authors \cite{D1,D2,D3,D4,D5,D6} and the generalized Witten conjecture was at last proved by Faber, Shadrin, and Zvonkine \cite{FSZ} after many other contributions \cite{C1,C2,C3,C4,C5,C6,C7,C8}.

Interestingly, the generalized Witten conjecture inspired the development of the quantum singularity theory by Fan, Jarvis, and Ruan \cite{FJRW,FJRW2}, also called Fan--Jarvis--Ruan--Witten (FJRW) theory. The $r$-spin theory corresponds to the singularity $x^r$ which is one particular example of the simple singularities:
\begin{equation*}
\begin{array}{ll}
x^{r+1},      & A_r-\textrm{case}, \\
x^2y+y^{r-1}, & D_r-\textrm{case}, \\
x^3+y^4,      & E_6-\textrm{case}, \\
x^3y+y^3,     & E_7-\textrm{case}, \\
x^3+y^5,      & E_8-\textrm{case}.
\end{array}
\end{equation*}
For each simple singularity there is a corresponding integrable hierarchy. These hierarchies are called the ADE-hierarchies. The $A_r$-hierarchy coincides with the $r$-th Gelfand--Dickey hierarchy. For simple singularities of type~$D$ and~$E$ these hierarchies were constructed by Drinfeld--Sokolov~\cite{DS} and later by Kac--Wakimoto \cite{KW} (these two constructions are equivalent by \cite{HM}). Fan, Jarvis, and Ruan~\cite{FJRW,FJRW2} generalized the result of~\cite{FSZ} proving that the partition functions of all simple singularities are $\tau$-functions of the corresponding integrable hierarchies.

Fan--Jarvis--Ruan--Witten theory is defined for any Landau--Ginzburg model $(W,G)$, that is the data of a quasi-homogeneous polynomial $W$ with an isolated singularity at the origin and of a group $G$ of diagonal matrices under which $W$ is invariant and which contains a matrix generated by the weights of the polynomial. Fan, Jarvis, and Ruan introduced a moduli space associated to the Landau--Ginzburg model~$(W,G)$ and constructed a virtual fundamental class on it, leading to a cohomological field theory.

FJRW theory is largely unknown in genus greater than zero and, in most cases, even the genus-zero invariants are hard to obtain. The situation is in fact very similar to Gromov--Witten theory; for instance, the main obstruction called non-concavity in FJRW theory is just as non-convexity in Gromov--Witten theory.

In a recent paper~\cite{Guere1}, the second author computed all genus-zero invariants of the quantum singularity theory of $(W,G_\mathrm{max})$ for chain polynomials 
$$
W = x_1^{a_1}x_2 + \ldots + x_{N-1}^{a_{N-1}}x_N + x_N^{a_N}
$$
and the maximal group $G_\mathrm{max}$. It involved to overcome non-concavity and it went through the Polishchuk--Vaintrob construction \cite{Polish1} of the virtual class by means of matrix factorizations. The second author explained how to get a two-periodic complex from these matrix factorizations and he introduced a new notion of a recursive complex to highlight their nice vanishing properties in cohomology. The main result is an explicit computation of the cohomology of a recursive complex, under some additional assumptions satisfied by the Polishchuk--Vaintrob construction in genus-zero for chain polynomials.

Interestingly, the same method provides some invariants for chain polynomials in an arbitrary genus as well. In~\cite{Guere2} (see also~\cite{PhDJG}), the second author proves an explicit formula for the cup product of the virtual class with the top Chern class of the Hodge bundle. Here we present this result in the case of the $r$-spin theory (see Theorem \ref{thmprodvirt}). Combining it with Chiodo's formula \cite{Chiodo1} and Hain's formula for the double ramification cycle~\cite{Hai11} we obtain an effective algorithm for a computation of the Hamiltonians of the double ramification hierarchy for the $r$-spin theory. In~\cite{BR14} the first author with P. Rossi found a simple recursion that allows to reconstruct the whole double ramification hierarchy starting from the Hamiltonian~$\og_{1,1}$. As a result, our method gives an effective way to reconstruct the double ramification hierarchy for the $r$-spin theory for an arbitrary fixed~$r$.

Another goal of the present paper is to compare the double ramification and the Dubrovin--Zhang hierarchies associated to the $r$-spin theory. It is known that, after certain rescalings, the Dubrovin--Zhang hierarchy coincides with the $r$-th Gelfand--Dickey hierarchy (\cite{Witten2,FSZ,DZ05}). In the $2$-spin case the associated double ramification hierarchy coincides with the Dubrovin--Zhang hierarchy that is the KdV hierarchy (see~\cite{Bur14}). In this paper, using our general method, we explicitly compute the Hamiltonian~$\og_{1,1}$ for the $3$, $4$ and $5$-spin theories and prove the following result. 
\begin{theorem}\label{theorem:main}
For the $3$-spin theory the double ramification hierarchy coincides with the Dubrovin--Zhang hierarchy. For the $4$ and $5$-spin theories the double ramification hierarchy is related to the Dubrovin--Zhang hierarchy by the following Miura transformation:
\begin{align*}
&\left\{
\begin{aligned}
&w^1=u^1+\frac{\eps^2}{96}u^3_{xx},\\
&w^2=u^2,\\
&w^3=u^3,
\end{aligned}\right.&&\text{for $r=4$};\\
&\left\{
\begin{aligned}
&w^1=u^1+\frac{\eps^2}{60}u^3_{xx},\\
&w^2=u^2+\frac{\eps^2}{60}u^4_{xx},\\
&w^3=u^3,\\
&w^4=u^4,
\end{aligned}\right.&&\text{for $r=5$}.
\end{align*}
\end{theorem}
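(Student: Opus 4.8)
The plan is to show that each of the two hierarchies is completely determined by its single Hamiltonian $\og_{1,1}$, and then to compare these Hamiltonians. Three inputs make this work. First, by~\cite{BR14} the whole double ramification hierarchy is reconstructed from $\og_{1,1}$ by an explicit recursion, and by construction its Hamiltonians pairwise commute with respect to the standard Poisson operator $\eta^{\alpha\beta}\partial_x$. Second, the Dubrovin--Zhang hierarchy of the $r$-spin theory, after the standard rescalings, is the $r$-th Gelfand--Dickey hierarchy written in the flat coordinates of the $r$-spin Frobenius manifold, hence completely explicit; in particular its Hamiltonian $\og_{1,1}^{\mathrm{DZ}}$ and its reconstruction from it are available. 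Third, the two hierarchies have the same dispersionless limit, namely the principal hierarchy of the $r$-spin Frobenius manifold, since their genus-zero parts are built from the same genus-zero $r$-spin data. Hence it suffices to (i) compute $\og_{1,1}$ of the double ramification hierarchy for $r=3,4,5$, and (ii) exhibit a Miura transformation $w=u+O(\eps^2)$ relating the two hierarchies — the identity for $r=3$, and the ones displayed in the statement for $r=4,5$.

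For step (i) I would run the effective algorithm assembled in this paper. By definition $\og_{1,1}$ is a generating series of the integrals $\int_{\oM_{g,n+1}}\DR_g(0,a_1,\dots,a_n)\,\lambda_g\,\psi_1\,W^r_{g,n+1}(e_1\otimes e_{\alpha_1}\otimes\dots\otimes e_{\alpha_n})$ with $\sum a_i=0$, so everything reduces to evaluating these. The factor $\lambda_g$ is crucial: it annihilates all non-compact-type boundary contributions, so that $\lambda_g\,\DR_g(a_1,\dots,a_n)$ equals $\lambda_g$ times Hain's explicit polynomial in the $a_i$~\cite{Hai11}, while $\lambda_g\,W^r_{g,n+1}$ is given by Gu\'er\'e's formula, Theorem~\ref{thmprodvirt}, which combined with Chiodo's formula~\cite{Chiodo1} is expressed through $\psi$- and $\kappa$-classes. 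After substitution each integral becomes a finite combination of known intersection numbers on $\oM_{g,n+1}$, and the summation over the $a_i$ produces, via Bernoulli-polynomial-type identities, a local polynomial functional. Finally, the polynomiality of the double ramification Hamiltonians together with the conformal grading of the $r$-spin theory bounds the genus $g$ contributing to $\og_{1,1}$ for fixed $r$; for $r=3,4,5$ only $g\le1$ survives, so finitely many integrals must be computed, yielding closed expressions for $\og_{1,1}$. This is the step I expect to be the main obstacle: converting Theorem~\ref{thmprodvirt}, Chiodo's formula and Hain's formula into actual numbers requires carefully unwinding the combinatorics of the $r$-spin Chern classes, tracking which boundary strata and which monomials in $\psi$- and $\kappa$-classes survive multiplication by $\lambda_g$, and establishing the genus truncation rather than merely expecting it.

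For step (ii) the dispersionless parts of $\og_{1,1}$ for the two hierarchies agree automatically, so the discrepancy sits in positive powers of $\eps^2$, and in fact only at order $\eps^2$ by the genus bound. A Miura transformation $w=u+\eps^2 B[u]+O(\eps^4)$ relating the two hierarchies must in particular intertwine the flows generated by $\og_{1,1}$ and the $x$-translation flow; expanding this requirement in powers of $\eps^2$ and using the same grading constraints turns it, order by order, into a finite linear system for $B[u]$ and its higher-order analogues. For $r=3$ the system forces $w=u$, so the two hierarchies coincide; for $r=4,5$ it has a unique solution, which one checks is precisely the transformation in the statement, and which terminates at order $\eps^2$. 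Since the double ramification hierarchy is recovered from $\og_{1,1}$ by~\cite{BR14} and the Dubrovin--Zhang hierarchy from its $\og_{1,1}$, the matching of these single Hamiltonians propagates to all Hamiltonians and all flows, completing the proof.
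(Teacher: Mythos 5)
There is a genuine gap at the crucial ``propagation'' step. You assert that, having matched the single Hamiltonian $\og_{1,1}$, the agreement ``propagates to all Hamiltonians and all flows'' because the double ramification hierarchy is reconstructed from $\og_{1,1}$ by the recursion of~\cite{BR14} and ``the Dubrovin--Zhang hierarchy from its $\og_{1,1}$''. The second half of that sentence is exactly what is not available: no such recursion is known on the Dubrovin--Zhang side, and it is not clear how to show that the Miura-transformed Dubrovin--Zhang hierarchy obeys the recursion of~\cite{BR14}. Two hierarchies sharing the same Poisson operator, the same dispersionless limit and one Hamiltonian need not coincide without an additional uniqueness statement. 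The paper supplies precisely this missing ingredient: Proposition~\ref{proposition:reconstruction} shows that a deformation of the principal hierarchy is uniquely determined by $\oh_{1,1}$ and its dispersionless part, \emph{provided} the special solution satisfies the string and dilaton equations; Proposition~\ref{proposition:dilaton} proves the dilaton equation for the string solution of the double ramification hierarchy (the string equation being known from~\cite{BR14}), and on the Dubrovin--Zhang side the topological solution satisfies both. Proposition~\ref{proposition:DR and DZ} then reduces the theorem to checking that the displayed Miura transformation fixes $u^1$-directions appropriately, carries $\eta\d_x$ to $K^{\text{$r$-spin}}$, and carries $\og_{1,1}$ to $\oh_{1,1}^{\text{$r$-spin}}$. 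Your order-by-order ``linear system for $B[u]$'' argument can at best produce a Miura transformation matching the two $\og_{1,1}$'s and the two brackets; it does not, by itself, imply equality of the full hierarchies.

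A second, concrete error: the claim that for $r=3,4,5$ ``only $g\le 1$ survives'' in $\og_{1,1}$ is false. The degree constraint gives $g+n\le r+1$ with $n\ge 2$, so contributions occur up to $g=r-1$; indeed the computed Hamiltonians in Propositions~\ref{proposition:DR 3-spin},~\ref{proposition:DR 4-spin},~\ref{proposition:DR 5-spin} contain terms of order $\eps^4$, $\eps^6$ and $\eps^8$ respectively, and the Dubrovin--Zhang Hamiltonians $\oh^{\text{$r$-spin}}_{1,1}$ contain the corresponding terms. With your truncation the computation of $\og_{1,1}$ would be incomplete and the comparison with the Gelfand--Dickey side would fail. (The Miura transformation does terminate at order $\eps^2$, but this is an outcome of the explicit comparison, not a consequence of the genus bound you state.) The remainder of your step (i) --- using Theorem~\ref{thmprodvirt}, Chiodo's formula and Hain's formula, with $\lambda_g$ killing the non-compact-type boundary --- is the same computational scheme as in the paper.
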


The theorem has several important consequences. Since the Dubrovin--Zhang hierarchy for the $r$-spin theory is closely related to the $r$-th Gelfand--Dickey hierarchy, the recursion formulas from~\cite{BR14} give recursion formulas for the Hamiltonians of the $r$-th Gelfand--Dickey hierarchy for $r=3,4,5$. As far as we know, these recursion formulas never appeared in the literature before. Another application of Theorem~\ref{theorem:main} comes from the work~\cite{BR15} where the first author together with P. Rossi constructed a natural quantization of the double ramification hierarchy. Using this construction and Theorem~\ref{theorem:main} we obtain a quantization of the $r$-th Gelfand--Dickey hierarchy for $r=3,4,5$. As far as we know, this result is also new.    

We would like to say a few words about the way we prove Theorem~\ref{theorem:main}. Both the double ramification hierarchy and the Dubrovin--Zhang hierarchy consist of an infinite number of Hamiltonians and it seems to be difficult to prove directly that they are related by some Miura transformation. As we already said, there are simple recursions (see~\cite{BR14}) that allow to reconstruct the whole double ramification hierarchy starting from just one Hamiltonian~$\og_{1,1}$. The problem is that on the Dubrovin--Zhang side we don't know such recursions. Moreover, even in the concrete example of the $r$-spin theory it seems to be difficult to prove that the associated Dubrovin--Zhang hierarchy, after some Miura transformation, satisfies the same recursions as were found in~\cite{BR14}. In order to overcome this difficulty we obtain a general result that, we believe, can have an independent interest. We prove that a hamiltonian hierarchy of certain type can be uniquely reconstructed from just one Hamiltonian using also the string and the dilaton equations for a specific solution (see Proposition~\ref{proposition:reconstruction}). This result together with an explicit computation of the Hamiltonian $\og_{1,1}$ of the double ramification hierarchy and the known description of the Dubrovin--Zhang hierarchy allows us to prove Theorem~\ref{theorem:main}.    

\begin{rem}
As we were informed by B.~Dubrovin and S.~Shadrin, Proposition~\ref{proposition:reconstruction} is known to experts, but it seems that it didn't appear in the literature before. 
\end{rem}

\subsection{Organization of the paper}

In Section~\ref{section:DR hierarchy} we recall the construction of the double ramification hierarchy. 

In Section~\ref{section:DR hierarchy for r-spin} we review the construction of Witten's $r$-spin class and present the formula from~\cite{Guere2,PhDJG} for its product with the the top Chern class of the Hodge bundle. We show how to apply it to a computation of the Hamiltonians of the double ramification hierarchy for the~$r$-spin theory and compute explicitly the Hamiltonian~$\og_{1,1}$ for the $3,4$ and $5$-spin theory.

In Section~\ref{section:DZ hierarchy for r-spin} we recall the construction of the $r$-th Gelfand--Dickey hierarchy and its relation to the Dubrovin--Zhang hierarchy for the $r$-spin theory.

In Section~\ref{section:proof} we prove that the string solution of the double ramification hierarchy satisfies the dilaton equation. Then we prove Proposition~\ref{proposition:reconstruction} and Theorem~\ref{theorem:main}.

In Section~\ref{section:quantization} we obtain a quantization of the $r$-spin Dubrovin--Zhang hierarchy for $r=3,4,5$.

\subsection{Acknowledgments}

We are grateful to A.~Chiodo who organized the conference ``Mirror symmetry and spin curves'' in Cortona, the present work was started there.
We also thank B.~Dubrovin, P.~Rossi, S.~Shadrin, R.~Pandharipande and D.~Zvonkine for discussions related to the work presented here.

The first author was supported by grant ERC-2012-AdG-320368-MCSK in the group of R. Pandharipande at ETH Zurich, by grants RFFI 13-01-00755 and NSh-4850.2012.1. The second author was supported by the Einstein Stiftung.


\section{Double ramification hierarchy}\label{section:DR hierarchy}

In this section we briefly recall the main definitions from~\cite{Bur14} (see also~\cite{BR14}). The double ramification hierarchy is a system of commuting Hamiltonians on an infinite dimensional phase space that can be heuristically thought of as the loop space of a fixed vector space. The entry datum for this construction is a cohomological field theory  in the sense of Kontsevich and Manin~\cite{KM94}. Denote by $c_{g,n}\colon V^{\otimes n} \to H^{\even}(\oM_{g,n};\mbC)$ the system of linear maps defining the cohomological field theory, $V$ its underlying $N$-dimensional vector space, $\eta$ its metric tensor and $e_1\in V$ the unit of the cohomological field theory.

\subsection{The formal loop space} 

The loop space of $V$ will be defined somewhat formally by describing its ring of functions. Following \cite{DZ05} (see also \cite{Ros09}), let us consider formal variables~$u^\alpha_i$, $\alpha=1,\ldots,N$, $i=0,1,\ldots$, associated to a basis $e_1,\ldots,e_N$ of $V$. Always just at a heuristic level, the variable $u^\alpha:=u^\alpha_0$ can be thought of as the component $u^\alpha(x)$ along $e_\alpha$ of a formal loop $u\colon S^1\to V$, where $x$ is the coordinate on $S^1$, and the variables $u^\alpha_{x}:=u^\alpha_1, u^\alpha_{xx}:=u^\alpha_2,\ldots$ as its $x$-derivatives. We then define the ring $\cA_N$ of {\it differential polynomials} as the ring of polynomials $f(u;u_x,u_{xx},\ldots)$ in the variables~$u^\alpha_i, i>0$, with coefficients in the ring of formal power series in the variables $u^\alpha=u^\alpha_0$. We can differentiate a differential polynomial with respect to $x$ by applying the operator $\partial_x := \sum_{i\geq 0} u^\alpha_{i+1} \frac{\partial}{\partial u^\alpha_i}$ (in general, we use the convention of sum over repeated greek indices, but not over repeated latin indices). Finally, we consider the quotient~$\Lambda_N$ of the ring of differential polynomials first by constants and then by the image of~$\partial_x$, and we call its elements {\it local functionals}. A local functional, that is the equivalence class of a differential polynomial~$f=f(u;u_x,u_{xx},\ldots)$, will be denoted by $\overline{f}=\int f dx$. 

Differential polynomials and local functionals can also be decribed using another set of formal variables, corresponding heuristically to the Fourier components $p^\alpha_k$, $k\in\mbZ$, of the functions $u^\alpha=u^\alpha(x)$. Let us, hence, define a change of variables
\begin{gather}\label{eq:u-p change}
u^\alpha_j = \sum_{k\in\mbZ} (i k)^j p^\alpha_k e^{i k x},
\end{gather}
which allows us to express a differential polynomial $f(u;u_x,u_{xx},\ldots)$ as a formal Fourier series in $x$ where the coefficient of $e^{i k x}$ is a power series in the variables $p^\alpha_j$ (where the sum of the subscripts in each monomial in $p^\alpha_j$ equals $k$). Moreover, the local functional~$\overline{f}$ corresponds to the constant term of the Fourier series of $f$.

Let us describe a natural class of Poisson brackets on the space of local functionals. Given an $N\times N$ matrix~$K=(K^{\mu\nu})$ of differential operators of the form $K^{\mu\nu} = \sum_{j\geq 0} K^{\mu\nu}_j \partial_x^j$, where the coefficients $K^{\mu\nu}_j$ are differential polynomials and the sum is finite, we define
$$
\{\overline{f},\overline{g}\}_{K}:=\int\left(\frac{\delta \overline{f}}{\delta u^\mu}K^{\mu \nu}\frac{\delta \overline{g}}{\delta u^\nu}\right)dx,
$$
where we have used the variational derivative $\frac{\delta \overline{f}}{\delta u^\mu}:=\sum_{i\geq 0} (-\partial_x)^i \frac{\partial f}{\partial u^\mu_i}$. Imposing that such bracket satisfies the anti-symmetry and the Jacobi identity will translate, of course, into conditions for the coefficients~$K^{\mu \nu}_j$. An operator that satisfies such conditions will be called hamiltonian. A standard example of a hamiltonian operator is given by $\eta \partial_x$. The corresponding Poisson bracket also has a nice expression in terms of the variables $p^\alpha_k$:
$$
\{p^\alpha_k, p^\beta_j\}_{\eta \partial_x} = i k \eta^{\alpha \beta} \delta_{k+j,0}.
$$

Finally, we will need to consider extensions $\hcA_N$ and $\hLambda_N$ of the spaces of differential polynomials and local functionals. First, let us introduce a grading $\deg u^\alpha_i = i$  and a new variable~$\eps$ with $\deg\eps = -1$. Then $\hcA^{[k]}_N$ and $\hLambda^{[k]}_N$ are defined, respectively, as the subspaces of degree~$k$ of $\hcA_N:=\cA_N\otimes\mbC[[\eps]]$ and of~$\hLambda_N:=\Lambda_N\otimes\mbC[[\eps]]$. Their elements will still be called differential polynomials and local functionals. We can also define Poisson brackets as above, starting from a hamiltonian operator $K=(K^{\mu\nu})$, $K^{\mu\nu} = \sum_{i,j\geq 0} K^{\mu\nu}_{ij} \eps^i \partial_x^j$, where $K^{\mu\nu}_{ij}$ are differential polynomials of degree~$i-j+1$. The corresponding Poisson bracket will then have degree $1$. 

Note that to any local functional~$\oh=\sum_{i\ge 0}\oh_i$, $\deg\oh_i=i$, from~$\Lambda_N$ we can naturally associate a local functional~$\oh'$ from~$\hLambda_N^{[0]}$ by $\oh':=\sum_{i\ge 0}\eps^i\oh_i$. We can do the same procedure with any hamiltonian operator $K=(K^{\mu\nu})$, $K^{\mu\nu}=\sum_{i\ge 0}K^{\mu\nu}_i\d_x^i$, $K^{\mu\nu}_i\in\cA_N$. Let $K^{\mu\nu}_i=\sum_{j\ge 0}K^{\mu\nu}_{ij}$, where $\deg K^{\mu\nu}_{ij}=j$. From the anti-symmetry property of the bracket~$\{\cdot,\cdot\}_K$ it follows that~$K^{\mu\nu}_{00}=0$. Then, we define an operator $K'=((K')^{\mu\nu})$ by $(K')^{\mu\nu}=\sum_{i,j\ge 0}K^{\mu\nu}_{ij}\eps^{i+j-1}\d_x^i$. Therefore, we can naturally consider the space~$\Lambda_N$ as a subspace of~$\hLambda^{[0]}_N$ and the set of hamiltonian operators for~$\Lambda_N$ as a subset of hamiltonian operators for~$\hLambda_N$. 

A hamiltonian system of PDEs is a system of the form
\begin{gather}\label{eq:Hamiltonian system}
\frac{\partial u^\alpha}{\partial \tau_i} = K^{\alpha\mu} \frac{\delta\overline{h}_i}{\delta u^\mu}, \ \alpha=1,\ldots,N ,\ i=1,2,\ldots,
\end{gather}
where $\oh_i\in\hLambda^{[0]}_N$ are local functionals with the compatibility condition $\{\oh_i,\oh_j\}_K=0$, for $i,j\geq 1$. The local functionals~$\oh_i$ are called the {\it Hamiltonians} of the system~\eqref{eq:Hamiltonian system}.

\subsection{The double ramification hierarchy} 

Consider an arbitrary cohomological field theory $c_{g,n}\colon V^{\otimes n} \to H^{\even}(\oM_{g,n};\mbC)$. As usual, we denote by $\psi_i$ the first Chern class of the line bundle over~$\oM_{g,n}$ formed by the cotangent lines at the $i$-th marked point. Denote by~$\mathbb E$ the rank~$g$ Hodge vector bundle over~$\oM_{g,n}$ whose fibers are the spaces of holomorphic one-forms. Let $\lambda_j:=c_j(\mathbb E)\in H^{2j}(\oM_{g,n};\mbC)$. The Hamiltonians of the double ramification hierarchy are defined as follows:
\begin{gather}\label{DR Hamiltonians}
\og_{\alpha,d}:=\sum_{\substack{g\ge 0,n\ge 2}}\frac{(-\eps^2)^g}{n!}\sum_{\substack{a_1,\ldots,a_n\in\mbZ\\\sum a_i=0}}\left(\int_{\DR_g(0,a_1,\ldots,a_n)}\lambda_g\psi_1^d c_{g,n+1}\left(e_\alpha\otimes \bigotimes_{i=1}^n e_{\alpha_i}\right)\right)\prod_{i=1}^n p^{\alpha_i}_{a_i},
\end{gather}
for $\alpha=1,\ldots,N$ and $d=0,1,2,\ldots$. Here $\DR_g(a_1,\ldots,a_n) \in H^{2g}(\oM_{g,n};\mbQ)$ is the double ramification cycle. On~$\cM_{g,n}$ it can be defined as the Poincar\'e dual to the locus of pointed smooth curves~$[C,p_1,\ldots,p_n]$ satisfying
$$
\mathcal O_C\left(\sum_{i=1}^n a_ip_i\right)\cong\mathcal O_C,
$$ 
and we refer the reader, for example, to~\cite{Bur14} for the definition of the double ramification cycle on the whole moduli space~$\oM_{g,n}$.

The expression on the right-hand side of~\eqref{DR Hamiltonians} can be uniquely written as a local functional from $\hLambda_N^{[0]}$ using the change of variables~\eqref{eq:u-p change}. Concretely it can be done in the following way. The integral
\begin{gather}\label{DR integral}
\int_{\DR_g\left(0,a_1,\ldots,a_n\right)}\lambda_g\psi_1^d c_{g,n+1}\left(e_\alpha\otimes \bigotimes_{i=1}^n e_{\alpha_i}\right)
\end{gather}
is a polynomial in $a_1,\ldots,a_n$ homogeneous of degree~$2g$. It follows from Hain's formula~\cite{Hai11}, the result of~\cite{MW13} and the fact that $\lambda_g$ vanishes on $\oM_{g,n}\setminus\M_{g,n}^{\ct}$, where $\M_{g,n}^{\ct}$ is the moduli space of stable curves of compact type. Thus, the integral~\eqref{DR integral} can be written as a polynomial
\begin{gather*}
P_{\alpha,d,g;\alpha_1,\ldots,\alpha_n}(a_1,\ldots,a_n)=\sum_{\substack{b_1,\ldots,b_n\ge 0\\b_1+\ldots+b_n=2g}}P_{\alpha,d,g;\alpha_1,\ldots,\alpha_n}^{b_1,\ldots,b_n}a_1^{b_1}\ldots a_n^{b_n}.
\end{gather*}
Then we have
$$
\og_{\alpha,d}=\int\sum_{g\ge 0,n\ge 2}\frac{\eps^{2g}}{n!}\sum_{\substack{b_1,\ldots,b_n\ge 0\\b_1+\ldots+b_n=2g}}P_{\alpha,d,g;\alpha_1,\ldots,\alpha_n}^{b_1,\ldots,b_n} u^{\alpha_1}_{b_1}\ldots u^{\alpha_n}_{b_n}dx.
$$
Note that the integral~\eqref{DR integral} is defined only when $a_1+\ldots+a_n=0$. Therefore, a polynomial~$P_{\alpha,d,g;\alpha_1,\ldots,\alpha_n}$ is actually not unique. However, the resulting local functional $\og_{\alpha,d}\in\hLambda_N^{[0]}$ doesn't depend on this ambiguity (see~\cite{Bur14}). 

The fact that the local functionals~$\og_{\alpha,d}$ mutually commute with respect to the standard bracket~$\eta\d_x$ was proved in~\cite{Bur14}. The system of local functionals $\og_{\alpha,d}$, for $\alpha=1,\ldots,N$, $d=0,1,2,\ldots$, and the corresponding system of hamiltonian PDEs with respect to the standard Poisson bracket~$\{\cdot,\cdot\}_{\eta\partial_x}$,
$$
\frac{\d u^\alpha}{\d t^\beta_q}=\eta^{\alpha\mu}\d_x\frac{\delta\og_{\beta,q}}{\delta u^\mu},
$$
is called the \emph{double ramification hierarchy}.


\section{Computation of the double ramification hierarchy for the $r$-spin theory}\label{section:DR hierarchy for r-spin}

In this section, we first recall the construction of the $r$-spin cohomological field theory. Then we present the formula from~\cite{Guere2,PhDJG} for the cup product of Witten's class with the top Chern class of the Hodge bundle, see Theorem \ref{thmprodvirt}. Finally, we show how to apply this result to a computation of the Hamiltonians of the double ramification hierarchy and compute explicitly the Hamiltonian $\og_{1,1}$ for the $r$-spin theory, with $r \leq 5$.

\subsection{$r$-spin theory}

The $r$-spin theory is the cohomological field theory corresponding to the quantum singularity theory, or Fan--Jarvis--Ruan--Witten theory, of the polynomial $x^r$. The state space of this theory is
\begin{equation*}
\st := \bigoplus_{k=1}^{r-1} \CC \cdot e_k
\end{equation*}
with the pairing $(e_k,e_l)=\delta_{k+l,r}$ and with the grading
\begin{equation*}
\deg (e_k) = 2\cfrac{k-1}{r}.
\end{equation*}
The linear maps $c_{g,n} \colon \st ^{\otimes n} \rightarrow H^{\even}(\overline{\mathcal{M}}_{g,n};\CC)$ defining the cohomological field theory factorize through the cohomology of another moduli space, called the moduli space of $r$-spin curves. Let us briefly recall it.

\begin{dfns}
A genus-$g$ orbifold (or twisted) curve $\cC$ with marked points is a connected, proper, one-dimensional Deligne--Mumford stack whose coarse space $C$ is a genus-$g$ nodal curve, such that the morphism $\rho \colon \cC \rightarrow C$ is an isomorphism away from the nodes and the marked points. Any marked point or node has a non-trivial stabilizer equal to a finite cyclic group. An orbifold curve is called smoothable if its local picture at the node is $ \left\lbrace xy=0 \right\rbrace/\mathbb{U}(k)$ for some $k \in \ZZ$, where the action of the group $\mathbb{U}(k)$ of $k$-th roots of unity is defined by
\begin{equation*}
\zeta_k \cdot (x,y) := (\zeta_k x , \zeta_k^{-1} y), \quad \zeta_k:=e^{\frac{2 \pi \ci}{k}}.
\end{equation*}
An $r$-stable curve is a smoothable orbifold curve whose stabilizers (at the nodes and at the markings) have the same fixed order $r$ and whose coarse nodal pointed curve is stable. An $r$-spin curve of genus $g$ is the data
\begin{equation*}
(\cC; \sigma_1,\dotsc,\sigma_n;\cL;\phi)
\end{equation*}
of an $r$-stable genus-$g$ orbifold curve $\cC$ with marked points $\sigma_1,\dotsc,\sigma_n$ and of a line bundle $\cL$ on the curve $\cC$ satisfying the condition
\begin{equation}\label{spinstruct}
\phi \colon \cL^{\otimes r} \simeq \omega_{\cC,\log} := \omega_{\cC}(\sigma_1+\dotsc+\sigma_n),
\end{equation}
where $\omega_\cC$ is the canonical line bundle on~$\cC$. The moduli space of $r$-spin curves is the stack classifying all $r$-spin curves of genus $g$ with $n$ marked points. We denote it by~$\sS^r_{g,n}$.
\end{dfns}

The moduli space $\sS^r_{g,n}$ is a smooth and proper Deligne--Mumford stack of complex dimension~$3g-3+n$ and it is a finite cover of the moduli space $\overline{\cM}_{g,n}$ of stable curves. The projection $\textrm{o} \colon \sS^r_{g,n} \to \overline{\cM}_{g,n}$ is obtained by forgetting the line bundle and the stack structure.

Locally at the marked point $\sigma_i$ of an $r$-spin curve, the group $\mathbb{U}(r)$ of $r$-th roots of unity acts on the line bundle $\cL$ as 
\begin{equation}\label{multiplicities}
\zeta_r \cdot (x,\xi) = (\zeta_r x, \zeta_r^{m_i} \xi),
\end{equation}
where $m_i \in \left\lbrace 0 , \dotsc, r-1 \right\rbrace$ is called the multiplicity of the line bundle~$\cL$ at the marked point~$\sigma_i$. Hence we have a decomposition
\begin{equation*}
\sS^r_{g,n} = \bigsqcup_{m_i\in \left\lbrace 0,1,\dotsc,r-1\right\rbrace} \sS^r_{g,n}(m_1,\dotsc,m_n)
\end{equation*}
of the moduli space of $r$-spin curves into the moduli spaces $\sS^r_{g,n}(m_1,\dotsc,m_n)$ of $r$-spin curves with fixed multiplicities $(m_1,\dotsc,m_n)$ at the marked points.
For the space $\sS^r_{g,n}(m_1,\dotsc,m_n)$ to be non-empty, the following selection rule \cite[Proposition 2.2.8]{FJRW} must hold:
\begin{gather*}
\zeta_r^{m_1} \dotsm \zeta_r^{m_n} = \zeta_r^{2g-2+n},
\end{gather*}
or, equivalently, 
\begin{gather}\label{selecrule}
m_1+\ldots+m_n=2g-2+n\,\,(\text{mod $r$}).
\end{gather}

\begin{rem}
There is an alternative description of the moduli space of $r$-spin curves with multiplicities $(m_1,\dotsc,m_n)$, where a curve $\cC$ has no orbifold structure at the marked points (but one still has orbifold structures at the nodes) and where the condition \eqref{spinstruct} on the line bundle $\cL$ is replaced by
\begin{equation*}
\phi \colon \cL^{\otimes r} \simeq \omega_{\cC}\Bigl( \sum_{i=1}^n (1-m_i) \sigma_i \Bigr) .
\end{equation*}
Both definitions lead to the same moduli space and the same theory.
\end{rem}

Each component $\sS^r_{g,n}(m_1,\dotsc,m_n)$ possesses an important homology cycle whose Poincar\'e dual cohomology class has degree
\begin{equation*}
2\cdot \degvir = \deg (e_{m_1}) + \dotsb + \deg (e_{m_n}) + 2 \Bigl(1-\frac{2}{r}\Bigr) (g-1).
\end{equation*}
This cohomology class is called Witten's virtual class
\begin{equation*}
\cvir(m_1,\dotsc,m_n)_{g,n}\in H^{2\cdot \degvir} (\sS^r_{g,n}(m_1,\dotsc,m_n);\CC)
\end{equation*}
and it is the main ingredient of the $r$-spin cohomological field theory. Witten's virtual class has the property to vanish when at least one multiplicity $m_i$ is zero, and we define a linear map~$c_{g,n}^{\text{$r$-spin}}\colon\st^{\otimes n}\to H^{\even}(\oM_{g,n};\CC)$ by
\begin{gather}\label{dfnCohFT}
c^{\text{$r$-spin}}_{g,n}(e_{m_1}\otimes\ldots\otimes e_{m_n}):=(-1)^\mathrm{degvir} ~ r^{1-g} ~ \textrm{o}_* \cvir(m_1,\dotsc,m_n)_{g,n},\quad m_i\in\{1,2,\ldots,r-1\}.
\end{gather}
The rescaling coefficient~$(-1)^\mathrm{degvir} ~ r^{1-g}$ occurs since we have to divide by the degree~$r^{2g-1}$ of the projection~$\textrm{o}$ and to multiply by the order of the group~$\mathbb{U}_r$. We explain the sign~$(-1)^{\degvir}$ in the remark below. The collection of maps~$c_{g,n}^{\text{$r$-spin}}$ has all the properties of cohomological field theory and is called the {\it $r$-spin theory}.

\begin{rem}
There are two different conventions on the sign of the virtual class. In \cite{ChiodoJAG,LG/CY,Guere1}, the virtual class in the concave situation is given by the top Chern class of the vector bundle~$R^1\pi_*\cL$ (see equation \eqref{concave}), whereas in \cite{FJRW,Polish1} they choose the dual of this vector bundle. Even without concavity, the two conventions only differ by the sign $(-1)^{\textrm{degvir}}$, where the integer $\degvir$ is the half of the cohomological degree of the virtual class. We follow in this paper the convention of \cite{ChiodoJAG,LG/CY,Guere1} for the virtual class, as we find it more natural. Nevertheless, when dealing with the cohomological field theory, we have to incorporate this sign, as shown in equation \eqref{dfnCohFT}, for the potential of the $r$-spin theory to be a tau-function of the $r$-th Gelfand--Dickey hierarchy.
\end{rem}

\begin{rem}
The quantum singularity theory is defined for more general polynomial singularities $W$ and there are two constructions of the virtual class. One uses analytic methods and has been provided by Fan, Jarvis, and Ruan \cite{FJRW,FJRW2}. The other construction, by Polishchuk and Vaintrob \cite{Polish1}, is algebraic and uses a general set-up of matrix factorizations. It is not known in general whether these two constructions give the same cohomological class, but it is proved to be true for $W=x^r$ (see \cite[Theorem 1.2]{Li2}, or the more general result~\cite[Theorem 3.25]{Guere1}).
\end{rem}

\subsection{Hodge integrals for the $r$-spin theory}

The goal of this section is to present the formula from~\cite{Guere2,PhDJG} for the product
\begin{gather}\label{virtprod}
\lambda_g\cvir(m_1,\ldots,m_n)_{g,n},
\end{gather}
where, abusing our notations a little bit, we denote by the same letter~$\lambda_g$ the top Chern class of the Hodge bundle over~$\sS^r_{g,n}(m_1,\dotsc,m_n)$. Actually, the result from~\cite{Guere2,PhDJG} works in a much more general situation. Here we apply it to the particular case of the $r$-spin theory.

Although we are not going to define the Witten's $r$-spin class $c_{vir}$, we are going to discuss its product $\eqref{virtprod}$ with the class $\lambda_g$. Indeed, only this product will be used in the rest of the paper.
As a consequence, Theorem \ref{thmprodvirt} can be seen here as a definition, compatible with the original definitions \cite{FJRW,FJRW2} and \cite{Polish1,ChiodoJAG} of Witten's $r$-spin class.

Consider a family of orbifold curves $\pi \colon \cC \rightarrow S$ over a smooth and proper base $S$, together with a universal line bundle $\cL$ satisfying the algebraic relation
\begin{equation}\label{algrel}
\cL^{\otimes r} \simeq \omega_{\cC/S}(\sigma_1+\dotsb+\sigma_n),
\end{equation}
where $\omega_{\cC/S}= \omega_\pi$ is the relative canonical bundle of the morphism $\pi \colon \cC \rightarrow S$.
First, there is an ideal situation where the push-forward~$R^0\pi_*\cL$ vanishes and where the sheaf~$R^1\pi_*\cL$ is a vector bundle; it is the so-called concave situation and it happens exactly when the genus of the curves is zero.
In this case, we define the virtual class as the top Chern class of the vector bundle $R^1\pi_*\cL$, that is
\begin{equation}\label{concave}
\cvir := c_\textrm{top}(R^1\pi_*\cL) \quad \textrm{in genus zero}.
\end{equation}

Without concavity, we have to deal with the higher push-forward $R^\bullet\pi_*\cL$, that we represent as a complex of two vector bundles:
\begin{equation*}
R^\bullet\pi_*\cL = \left[ A \xrightarrow{d} B \right].
\end{equation*}
Over a geometric point $s \in S$, the kernel of the map $d$ is the vector space $H^0(\cC_s,\cL_s)$ and its cokernel is $H^1(\cC_s,\cL_s)$.
There is no natural extension of the top Chern class to a general K-theoretic element such as $R^\bullet\pi_*\cL$. Indeed, to respect multiplicativity of the top Chern class, we would need such an extension to be invertible and in particular to be possibly of negative cohomological degree.

Remarkably, the algebraic relation \eqref{algrel} gives us the opportunity to extend the definition of the top Chern class to the $K$-theoretic element 
$$
B+\mathbb{E}^\vee-A,
$$
where $\mathbb{E}$ is the Hodge bundle over~$S$ with fiber $H^0(\cC_s,\omega_{\cC_s})$ over a geometric point $s \in S$. More precisely, we use Polishchuk--Vaintrob's construction \cite{Polish1}, revisited as the cohomology of a recursive complex as in \cite{Guere1} and we end up with a specific characteristic class, that we now describe.

For a vector bundle $V$ on $S$ and a parameter $t \in \CC$, let us define the class
\begin{equation}\label{charclass}
\fc_t(V) := \Ch (\lambda_{-t} V^\vee)\Td (V) \in H^*(S)\left[t\right].
\end{equation} 
Here $\Ch$ denotes the Chern character, $\Td$ is the Todd class and~$\lambda_{-t}$ denotes the $\lambda$-ring structure of $K$-theory according to \cite[Ch V]{Fult}, that is
\begin{equation*}
\lambda_{t} V := \sum_{k \geq 0} (\Lambda^k V)t^k \in K^0(S)\left[t\right] .
\end{equation*}
By \cite[Ch.~I, Prop.~5.3]{Fult}, we have
\begin{equation*}
\lim_{t \to 1} \fc_t(V) = c_\textrm{top}(V).
\end{equation*}
The classes $\lambda_t V$ and $\fc_t(V)$ are invertible in $K^0(S)[\![ t ]\!]$ and $H^*(S)[\![ t ]\!]$ respectively. Therefore, the function $\fc_t$ can be defined for an arbitrary element from $K^0(S)$ as follows:
\begin{equation}\label{charclass2}
\fc_t(B-A) := \frac{\fc_t(B)}{\fc_t(A)} = \Ch \biggl( \frac{\lambda_{-t}B^\vee}{\lambda_{-t}A^\vee} \biggr) ~ \cfrac{\Td B}{\Td A} \in H^*(S)[\![ t ]\!],
\end{equation}
where $A$ and $B$ are two vector bundles. Since the class $c_\textrm{top}$ is not invertible, in general the radius of convergence of~$\fc_t$ as a function of~$t$ is equal to~$1$ and the limit~$t\to 1$ doesn't exist.

In~\cite{Guere1} it was observed that the characteristic class~$\fc_t$ can be extended to every $t \neq 1$ by taking
\begin{equation}\label{cvirx}
\fc_t(B-A) = \exp\left(\sum\limits_{l \geq 0} s_l(t) \Ch_l(A-B)\right), 
\end{equation}
with the functions
\begin{equation}\label{parametresl}
s_l(t) = \left\lbrace 
\begin{split}
&- \ln (1-t),  & \qquad \textrm{if $l=0$}; \\
&\cfrac{B_l}{l} + (-1)^l \sum\limits_{k=1}^l (k-1)! \left( \frac{t}{1-t} \right)^k \gamma(l,k), & \qquad \textrm{if } l \geq 1. \\
\end{split}\right. 
\end{equation}
\noindent
Here $B_l$ is the Bernoulli number and the number $\gamma(l,k)$ is defined by the generating function
\begin{equation*}
\sum_{l \geq 0} \gamma(l,k) \frac{z^l}{l!} := \frac{(e^z-1)^k}{k!}.
\end{equation*}
We notice that $\gamma(l,k)$ vanishes for $k>l$ and that the sum over $l$ in \eqref{cvirx} is finite because the $l$-th Chern character $\Ch_l$ vanishes for $l > \dim(S)$. By~\cite[Lemma 3.18]{Guere1}, the definition \eqref{cvirx} coincides with \eqref{charclass2} when $\left| t \right| < 1$. We see that the function $t \mapsto \fc_t$ is a meromorphic function with coefficients in $H^*(S)$ and with a unique pole at $t=1$.

Now we can state the main theorem of \cite{Guere2,PhDJG} in the case of the $r$-spin theory. 
\begin{thm}\label{thmprodvirt}
For any genus $g$ and any numbers $m_1,\dotsc,m_n \in \left\lbrace 1,\dotsc,r-1 \right\rbrace$, the limit $\lim_{t\to 1}\fc_t(-R^\bullet \pi_*(\cL))\fc_{t^{-r}}(\mathbb{E}^\vee)$ exists and we have
\begin{equation}\label{lW}
(-1)^g ~ \lambda_g ~ \cvir(m_1,\dotsc,m_n)_{g,n}=\lim_{t\to 1}\fc_t(-R^\bullet \pi_*(\cL))\fc_{t^{-r}}(\mathbb{E}^\vee),
\end{equation}
where $\mathbb E$ is the Hodge vector bundle over $\sS^r_{g,n}(m_1,\dotsc,m_n)$.
\end{thm}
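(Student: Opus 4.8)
The plan is to derive \eqref{lW} from the Polishchuk--Vaintrob construction of $\cvir$ together with Grothendieck--Riemann--Roch, the guiding idea being that the extra factor $\fc_{t^{-r}}(\mathbb{E}^\vee)$ is exactly what cancels the pole of $\fc_t(-R^\bullet\pi_*\cL)$ at $t=1$. First I would recall from \cite{Polish1,Guere1} that over $S=\sS^r_{g,n}(m_1,\dotsc,m_n)$ the class $\cvir$ is produced by a matrix factorization of $x^r$, whose associated two-periodic complex is a recursive complex in the sense of \cite{Guere1}; hence its hypercohomology is concentrated in a single degree, and --- for $|t|<1$ and in the meromorphic continuation provided by \eqref{cvirx} --- its Chern character is governed by $\fc_t$ applied to the $K$-theory class $R^\bullet\pi_*\cL=[A]-[B]$. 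The base case is the genus-zero concave situation: there $A=0$, $\mathbb{E}=0$, $\fc_{t^{-r}}(\mathbb{E}^\vee)=1$, and the right-hand side of \eqref{lW} reduces to $\lim_{t\to1}\fc_t(R^1\pi_*\cL)=\ctop(R^1\pi_*\cL)$, which is the concave definition \eqref{concave}.

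Next I would pass to additive notation. By \eqref{cvirx},
\begin{equation*}
\fc_t(-R^\bullet\pi_*\cL)\,\fc_{t^{-r}}(\mathbb{E}^\vee)=\exp\Bigl(\sum_{l\ge0}\bigl(s_l(t)\,\Ch_l(R^\bullet\pi_*\cL)-s_l(t^{-r})\,\Ch_l(\mathbb{E}^\vee)\bigr)\Bigr),
\end{equation*}
so everything is controlled by the two Chern characters, which I would compute by Grothendieck--Riemann--Roch applied to the universal curve $\pi\colon\cC\to S$. Chiodo's formula \cite{Chiodo1} expresses $\Ch_l(R^\bullet\pi_*\cL)$ through $\psi$-classes at the markings, $\kappa$-classes and boundary strata, with Bernoulli-polynomial coefficients evaluated at the $m_i/r$ and at the node multiplicities, using the algebraic relation \eqref{algrel}, i.e.\ $c_1(\cL)=\tfrac1r\,c_1(\omega_\pi(\sigma_1+\dotsb+\sigma_n))$; and Mumford's formula (the same computation for $\omega_\pi$) gives $\Ch_l(\mathbb{E})$, hence $\Ch_l(\mathbb{E}^\vee)=(-1)^l\Ch_l(\mathbb{E})$, in terms of the Bernoulli numbers $B_{2l}$ and the total boundary.

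The heart of the proof is to show that $\fc_t(-R^\bullet\pi_*\cL)\,\fc_{t^{-r}}(\mathbb{E}^\vee)$ is regular at $t=1$. Since $\mathbb{E}^\vee$ is a genuine bundle, $\fc_{t^{-r}}(\mathbb{E}^\vee)$ is a Laurent polynomial in $t$ whose degree-zero part is $(1-t^{-r})^g$, vanishing to order $g$ at $t=1$; on the other hand the degree-zero part of $\fc_t(-R^\bullet\pi_*\cL)$ is $(1-t)^{-\chi(\cL)}$ with $\chi(\cL)=h^0(\cL)-h^1(\cL)\le g$ (since $\deg\cL=\tfrac1r(2g-2+n-\sum_i m_i)$ is bounded above by $\tfrac{2g-2}{r}$), so regularity holds in degree zero. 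In positive cohomological degrees it does not follow from order counting alone --- the poles coming from the $\bigl(\tfrac{t}{1-t}\bigr)^k$ terms in $s_l(t)$ for $l\ge1$ would be too strong --- so one must check that the polar parts of $s_l(t)\,\Ch_l(R^\bullet\pi_*\cL)$ are precisely matched by those of $s_l(t^{-r})\,\Ch_l(\mathbb{E}^\vee)$. Two inputs are decisive: near $t=1$ one has $\tfrac{t}{1-t}\sim\tfrac1{1-t}$ and $\tfrac{t^{-r}}{1-t^{-r}}\sim-\tfrac1{r(1-t)}$, so the factor $r^{-l}$ produced by $c_1(\cL)=\tfrac1r\,c_1(\omega_{\pi,\log})$ aligns the leading polar coefficients of the two sums; and $\lambda_g$ --- which both sides carry in the limit --- vanishes on $\oM_{g,n}\setminus\M_{g,n}^{\ct}$, so it annihilates every contribution of a non-separating boundary stratum, which is exactly where the matching would otherwise break. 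I expect establishing this cancellation uniformly over all cohomological degrees to be the main obstacle, with the recursive-complex vanishing of \cite{Guere1} the tool that removes the dangerous strata.

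Finally, once regularity at $t=1$ is known, I would identify the limit: $\fc_{t^{-r}}(\mathbb{E}^\vee)\to\ctop(\mathbb{E}^\vee)=(-1)^g\lambda_g$, while the companion factor, by the Polishchuk--Vaintrob construction recalled above, reproduces $\cvir(m_1,\dotsc,m_n)_{g,n}$; the sign $(-1)^g$ and the overall normalization are then pinned down by comparison with the base case and with Chiodo's formula \cite{ChiodoJAG}, which yields \eqref{lW}.
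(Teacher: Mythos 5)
You should first note that this paper does not actually prove Theorem~\ref{thmprodvirt}: it is imported from \cite{Guere2,PhDJG}, and the discussion just before the statement says explicitly that, within this paper, the theorem may even be read as a definition of the product $\lambda_g\cvir$, compatible with the original constructions of \cite{FJRW,FJRW2} and \cite{Polish1,ChiodoJAG}. So there is no in-paper proof to compare against; what you propose is a reconstruction of the argument of \cite{Guere2}, and as such it has two genuine gaps.

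First, the regularity of $\fc_t(-R^\bullet\pi_*\cL)\,\fc_{t^{-r}}(\mathbb{E}^\vee)$ at $t=1$ in positive cohomological degrees --- which you correctly identify as the heart of the matter --- is not established. Your degree-zero order count is fine, but for $l\ge 1$ you only say the polar parts of the $s_l(t)\Ch_l$ terms ``must be checked'' to match, and the one mechanism you do invoke is circular: you argue that $\lambda_g$ annihilates the non-separating boundary contributions because it vanishes on $\oM_{g,n}\setminus\M^{\ct}_{g,n}$, but $\lambda_g$ appears only on the left-hand side of \eqref{lW}. The limit on the right-hand side has to exist as a class on all of $\sS^r_{g,n}(m_1,\dotsc,m_n)$ (the paper even remarks that the identity holds in the Chow ring), and you cannot use a feature of the conjectural value of the limit to prove that the limit exists. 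Second, the final identification step is invalid as stated: since $\fc_t(-R^\bullet\pi_*\cL)$ has a pole at $t=1$ (its radius of convergence is $1$, as noted after \eqref{charclass2}), the limit of the product is not the product of the limits, so one may not argue ``$\fc_{t^{-r}}(\mathbb{E}^\vee)\to(-1)^g\lambda_g$ while the companion factor reproduces $\cvir$''; and because $\lambda_g$ is a zero divisor, knowing that the product limit exists and matches the genus-zero concave case \eqref{concave} does not pin it down to $(-1)^g\lambda_g\cvir$. The identification with the Polishchuk--Vaintrob class is precisely where the matrix-factorization and recursive-complex machinery of \cite{Guere1,Guere2} has to be run in detail (comparison of the localized Chern character of the two-periodic complex with the class \eqref{cvirx}, together with the vanishing theorems for recursive complexes); invoking it by name leaves the actual content of the theorem unproved, while the GRR/Chiodo/Mumford bookkeeping and the concave base case you carry out are the easy part.
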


Mumford's formula~\cite{Mumford} expresses the Chern character of the Hodge bundle over~$\oM_{g,n}$ in terms of tautological classes. Chiodo's formula \cite[Theorem 1.1.1]{Chiodo1} is a generalization of it and computes the Chern character of the higher push-forward~$R^\bullet \pi_*(\cL)$ over the moduli space~$\sS^r_{g,n}(m_1,\ldots,m_n)$. Together with Theorem \ref{thmprodvirt} and definitions \eqref{cvirx} and \eqref{parametresl}, we are then able to compute explicitly the class \eqref{virtprod} in terms of tautological classes.

\begin{rem}
An alternative way to compute the push-forward of the left-hand side of equation~\eqref{lW} to the moduli space of stable curves is to use Teleman's classification of semi-simple cohomological field theories \cite{Teleman}.
Our method has nevertheless the advantage to be easier to implement into a computer (see \cite{computerprogram,PhDJG}).
Moreover, we notice that equation \eqref{lW} is more general, since it is valid even in the Chow ring of the moduli space $\sS^r_{g,n}$, and that our approach does not use any semi-simplicity condition.
\end{rem}

In particular, the second author has developed a computer program \cite{computerprogram,PhDJG} evaluating any Hodge integrals, i.e.~intersection numbers involving $\psi$-classes, the class $\lambda_g$ and Witten's class.
All the numerical results in this paper have been obtained with it. We give an example of such computations in the next section, without providing all the details. However, it is useful to write Chiodo's formula
\begin{equation}\label{grrorb}
\mathrm{Ch}(R^\bullet\pi_*\cL) = \sum_{l \geq 0} \left( \frac{B_{l+1}(\frac{1}{r})}{(l+1)!} \kappa_l - \sum_{i=1}^n \frac{B_{l+1}(\frac{m(i)}{r})}{(l+1)!} \psi_i^l + \frac{r}{2} \sum_{m=1}^{r-1} \frac{B_{l+1}(\frac{m}{r})}{(l+1)!} (j_m)_*(\delta_{l-1}) \right).
\end{equation}
Here $B_d(x)$ is the Bernoulli polynomial defined by the generating series
\begin{equation*}
\sum_{d \geq 0} B_d(x) \frac{t^d}{d!} = \frac{t e^{xt}}{e^t-1} \qquad \textrm{(with $B_l:=B_l(0)$)}.
\end{equation*}
The morphism $j_m$ goes from $\widetilde{\Delta}_m$ to the moduli space $\sS^r_{g,n}(m_1,\dotsc,m_n)$, where the divisor $\Delta_m$ corresponds to the nodal $r$-spin curves with a node of multiplicity $m$ and $\widetilde{\Delta}_m$ is the covering of $\Delta_m$ corresponding to the extra choice of a node together with a branch with multiplicity $m$ at the node.
Note that the multiplicity at the node on the other branch is $r-m$.
The class $\delta_l$ is defined by
\begin{equation*}
\delta_l=
\begin{cases}
\sum_{a+a'=l}\psi^a(-\hat{\psi})^{a'},&\text{if $l\ge 0$},\\
0,&\text{otherwise},
\end{cases}
\end{equation*}
where $\psi$ is the first Chern class of the line bundle on $\widetilde{\Delta}_m$ corresponding to the cotangent line bundle at the given node and on the chosen branch (the one with multiplicity $m$), and $\hat{\psi}$ is the first Chern class of the line bundle on $\widetilde{\Delta}_m$ corresponding to the cotangent line bundle at the given node on the other branch (the one with multiplicity $r-m$).

\subsection{Double ramification hierarchy for the $r$-spin theory}

In order to compute the Hamiltonians~\eqref{DR Hamiltonians} of the double ramification hierarchy, we have to compute the integrals
\begin{equation}\label{DR integrals another}
\int_{\DR_g\left(0,a_1,\ldots,a_n\right)}\lambda_g\psi_1^d c^{\text{$r$-spin}}_{g,n+1}\left(e_\alpha\otimes \bigotimes_{i=1}^n e_{\alpha_i}\right).
\end{equation}
By checking the degree, this integral can be non-zero only if
\begin{eqnarray*}
3g-3+n+1 & = & 2g + d + \degvir \nonumber \\
& = & 2g + d + \frac{\sum_{i=1}^n \alpha_i + \alpha - n -1 + 2 - 2g}{r}+g-1,
\end{eqnarray*}
or, equivalently,
\begin{gather}\label{eq:selection for DR integrals}
\sum_{i=1}^n\alpha_i=(r+1)n+(2g-1-\alpha)-r(d+1).
\end{gather}
Using the inequality $\alpha_i\le r-1$, we obtain
\begin{gather}\label{eq:inequality}
2n+2g\le\alpha+1+r(d+1).
\end{gather}
Thus, for any fixed~$r,\alpha$ and~$d$ we have a finite number of choices for~$g,n$ and, thus, for~$\alpha_i$'s.

Consider now the integral~\eqref{DR integrals another} with fixed~$r,\alpha,d$ and~$\alpha_1,\ldots,\alpha_n$ that satisfy equation~\eqref{eq:selection for DR integrals}. Hain's formula~\cite{Hai11} together with the result of~\cite{MW13} implies that
\begin{gather}\label{eq:Hain's formula}
\left.\DR_g(b_1,\dotsc,b_n)\right|_{\cM^{\ct}_{g,n}}=\frac{1}{g!} \left( \sum_{j=1}^n \frac{b_j^2 \psi^{\dagger}_j}{2} - \sum_{\substack{J \subset \left\lbrace 1,\dotsc,n\right\rbrace  \\ \left| J \right| \geq 2}} \left(\sum_{i,j \in J, i<j} b_i b_j\right) \delta_0^J - \frac{1}{4} \sum_{J \subset \left\lbrace 1,\dotsc,n\right\rbrace} \sum_{h=1}^{g-1} b_J^2 \delta_h^J \right)^g,
\end{gather}
where~$\psi^\dagger_j$ denotes the $\psi$-class that is pulled back from~$\oM_{g,1}$, the integer $b_J$ is the sum $\sum_{j\in J} b_j$ and the class $\delta_h^J$ represents the divisor whose generic point is a nodal curve made of one smooth component of genus $h$ with the marked points labeled by the list $J$ and of another smooth component of genus $g-h$ with the remaining marked points, joined at a separating node. It is often more convenient to rewrite formula~\eqref{eq:Hain's formula} using the usual $\psi$-classes:
\begin{gather}\label{eq:Hain's formula,second}
\left.\DR_g(b_1,\dotsc,b_n)\right|_{\cM^{\ct}_{g,n}}=\frac{1}{g!}\left( \sum_{j=1}^n \frac{b_j^2 \psi_j}{2}-\frac{1}{2}\sum_{\substack{J \subset \left\lbrace 1,\dotsc,n\right\rbrace  \\ \left|J\right|\geq 2}}b_J^2 \delta_0^J - \frac{1}{4} \sum_{J \subset \left\lbrace 1,\dotsc,n\right\rbrace} \sum_{h=1}^{g-1} b_J^2 \delta_h^J \right)^g,
\end{gather}
Since the class $\lambda_g$ vanishes on the complement of $\mathcal{M}_{g,n}^\textrm{ct}$ in $\overline{\mathcal{M}}_{g,n}$, we have
\begin{multline*}
\int_{\DR_g(0,a_1,\dotsc,a_n)}\lambda_g\psi_1^d c^{\text{$r$-spin}}_{g,n+1}\left(e_\alpha\otimes \bigotimes_{i=1}^n e_{\alpha_i}\right)=\\
=\frac{1}{g!}\int_{\oM_{g,n+1}}\lambda_g\psi_1^d c^{\text{$r$-spin}}_{g,n+1}\left(e_\alpha\otimes \bigotimes_{i=1}^n e_{\alpha_i}\right)\left(\sum_{j=1}^n \frac{a_j^2 \psi_j}{2}-\frac{1}{2}\sum_{\substack{J \subset \left\lbrace 0,1,\dotsc,n\right\rbrace  \\ \left| J \right| \geq 2}}a_J^2\delta_0^J-\frac{1}{4}\sum_{J \subset \left\lbrace 0,1,\dotsc,n\right\rbrace} \sum_{h=1}^{g-1} a_J^2 \delta_h^J \right)^g,
\end{multline*}
where on the right-hand side we, by definition, put $a_0:=0$. Applying the procedure, described in the previous section, we are then able to compute this integral. As a result, we obtain an algorithm for the computation of an arbitrary fixed Hamiltonian of the double ramification hierarchy for the $r$-spin theory.

\subsection{Computation of the Hamiltonian $\og_{1,1}$}\label{cg11}

For the Hamiltonian~$\og_{1,1}$, the inequality~\eqref{eq:inequality} becomes
\begin{equation*}
g+n\leq r+1.
\end{equation*}
Since we require $n \geq 2$ in the definition of the Hamiltonian~\eqref{DR Hamiltonians}, we must have
\begin{equation*}
g \leq r-1.
\end{equation*}
Note that the case $g=r-1$ happens exactly for $n=2$ and $\alpha_1=\alpha_2=r-1$. 

Denote by $n_k$ the cardinal of the set $\left\lbrace 1 \leq i \leq n \left| \right. \alpha_i=k \right\rbrace$, so that equation \eqref{eq:selection for DR integrals} becomes
\begin{equation}\label{selec3}
2g-2 + \sum_{k=1}^{r-1} (r+1-k) n_k = 2r.
\end{equation}
In particular, for $r \leq 5$, the only solutions to \eqref{selec3} with $n=\sum_{k=1}^{r-1}n_k\geq 2$ are
\begin{align*}
&r=3 : \quad \left\lbrace \begin{array}{lcl}
g=0 & \textrm{and} & (n_1,n_2) \in \left\lbrace (0,4), (2,1) \right\rbrace, \\
g=1 & \textrm{and} & (n_1,n_2) \in \left\lbrace (0,3), (2,0) \right\rbrace, \\
g=2 & \textrm{and} & (n_1,n_2) \in \left\lbrace (0,2) \right\rbrace; \\
\end{array}\right.\\
&r=4 : \quad \left\lbrace \begin{array}{lcl}
g=0 & \textrm{and} & (n_1,n_2,n_3) \in \left\lbrace (0, 0, 5), (0, 2, 2), (1, 0, 3), (1, 2, 0), (2, 0, 1) \right\rbrace, \\
g=1 & \textrm{and} & (n_1,n_2,n_3) \in \left\lbrace (0, 0, 4), (0, 2, 1), (1, 0, 2), (2, 0, 0) \right\rbrace, \\
g=2 & \textrm{and} & (n_1,n_2,n_3) \in \left\lbrace (0, 0, 3), (0, 2, 0), (1, 0, 1) \right\rbrace, \\
g=3 & \textrm{and} & (n_1,n_2,n_3) \in \left\lbrace (0, 0, 2) \right\rbrace; \\
\end{array}\right.\\
&r=5 : \quad \left\lbrace \begin{array}{lcll}
g=0 & \textrm{and} & (n_1,n_2,n_3,n_4) \in &\left\lbrace (0, 0, 0, 6), (0, 0, 2, 3), (0, 0, 4, 0), (0, 1, 0, 4), (0, 1, 2, 1),\right. \\
& & & \left.  (0, 2, 0, 2), (0, 3, 0, 0), (1, 0, 1, 2), (1, 1, 1, 0), (2, 0, 0, 1) \right\rbrace, \\
g=1 & \textrm{and} & (n_1,n_2,n_3,n_4) \in &\left\lbrace (0, 0, 0, 5), (0, 0, 2, 2), (0, 1, 0, 3), (0, 1, 2, 0), (0, 2, 0, 1), \right. \\
& & & \left. (1, 0, 1, 1), (2, 0, 0, 0) \right\rbrace, \\
g=2 & \textrm{and} & (n_1,n_2,n_3,n_4) \in &\left\lbrace (0, 0, 0, 4), (0, 0, 2, 1), (0, 1, 0, 2), (0, 2, 0, 0), (1, 0, 1, 0) \right\rbrace, \\
g=3 & \textrm{and} & (n_1,n_2,n_3,n_4) \in &\left\lbrace (0, 0, 0, 3), (0, 0, 2, 0), (0, 1, 0, 1) \right\rbrace, \\
g=4 & \textrm{and} & (n_1,n_2,n_3,n_4) \in &\left\lbrace (0, 0, 0, 2) \right\rbrace. \\
\end{array}\right.
\end{align*}

Once we have found all the non-trivially-zero contributions to the Hamiltonian~$\og_{1,1}$, we have to compute the integrals
$$
\int_{\DR_g\left(0,a_1,\ldots,a_n\right)}\lambda_g\psi_1 c^{\text{$r$-spin}}_{g,n+1}\left(e_1\otimes \bigotimes_{k=1}^{r-1} e_k^{\otimes n_k}\right).
$$
Note that comparing to the general case~\eqref{DR integrals another} this integral can be simplified a little bit using the dilaton equation. We get
\begin{align*}
&\int_{\DR_g\left(0,a_1,\ldots,a_n\right)}\lambda_g\psi_1 c^{\text{$r$-spin}}_{g,n+1}\left(e_1\otimes \bigotimes_{k=1}^{r-1} e_k^{\otimes n_k}\right)=(2g-2+n)\int_{\DR_g(a_1,\dotsc,a_n)}\lambda_g c^{\text{$r$-spin}}_{g,n}\left(\bigotimes_{k=1}^{r-1} e_k^{\otimes n_k}\right)=\\
=&\frac{2g-2+n}{g!}\int_{\oM_{g,n}}\lambda_g c^{\text{$r$-spin}}_{g,n}\left(\bigotimes_{k=1}^{r-1} e_k^{\otimes n_k}\right)\left(\sum_{j=1}^n \frac{a_j^2 \psi_j}{2} - \frac{1}{2}\sum_{\substack{J \subset \left\lbrace 1,\dotsc,n\right\rbrace  \\ \left| J \right| \geq 2}}a_J^2 \delta_0^J - \frac{1}{4} \sum_{J \subset \left\lbrace 1,\dotsc,n\right\rbrace} \sum_{h=1}^{g-1} a_J^2 \delta_h^J \right)^g.
\end{align*}

As an example, we consider the contribution to the Hamiltonian~$\og_{1,1}$ corresponding to~$(g,n_1,n_2)=(2,0,2)$ for the $3$-spin theory:
\begin{equation*}
T^{\text{$3$-spin}}_{(2,0,2)}:=2\eps^4\sum_{a\in\mbZ} \left( \int_{\DR_2(a,-a)}\lambda_2 c^{\text{$3$-spin}}_{2,2}\left(e_2^{\otimes 2}\right)\right) p^2_{a}p^2_{-a}.
\end{equation*}
We get
$$
T^{\text{$3$-spin}}_{(2,0,2)}=\eps^4\sum_{a\in\mbZ}a^4 p^2_a p^2_{-a}\int_{\oM_{2,2}}\lambda_2 c^{\text{$3$-spin}}_{2,2}\left(e_2^{\otimes 2}\right)\left(\frac{\psi_1+\psi_2}{2}-\frac{1}{2}\delta_1^{\{1\}}\right)^2.
$$
Using the factorization property and the selection rule~\eqref{selecrule} we can easily see that
$$
c^{\text{$3$-spin}}_{2,2}\left(e_2^{\otimes 2}\right)\cdot\delta_1^{\{1\}}=0.
$$
Therefore, we get
\begin{multline*}
T^{\text{$3$-spin}}_{(2,0,2)}=\frac{\eps^4}{4}\sum_{a\in\mbZ}a^4 p^2_a p^2_{-a}\int_{\oM_{2,2}}\lambda_2 c^{\text{$3$-spin}}_{2,2}\left(e_2^{\otimes 2}\right)\left(\psi_1+\psi_2\right)^2=\\
=\frac{\eps^4}{2}\sum_{a\in\mbZ}a^4 p^2_a p^2_{-a}\int_{\oM_{2,2}}\lambda_2 c^{\text{$3$-spin}}_{2,2}\left(e_2^{\otimes 2}\right)\left(\psi_1^2+\psi_1\psi_2\right).
\end{multline*}
Now, we use Theorem \ref{thmprodvirt} and, with a lot of help from the computer program \cite{computerprogram,PhDJG}, we get the following values for the Hodge integrals:
\begin{gather*}
\int_{\overline{\mathcal{M}}_{2,2}}\psi_2^2\lambda_2 c^{\text{$3$-spin}}_{2,2}\left(e_2^{\otimes 2}\right)=\frac{7}{4320},\qquad \int_{\overline{\mathcal{M}}_{2,2}}\psi_2\psi_3\lambda_2 c^{\text{$3$-spin}}_{2,2}\left(e_2^{\otimes 2}\right)=\frac{13}{4320}.
\end{gather*}
As a consequence, we obtain
\begin{gather*}
T^{\text{$3$-spin}}_{(2,0,2)}=\eps^4\sum_{a\in\mbZ}\frac{a^4}{432}p^2_ap^2_{-a}=\int\frac{\eps^4}{432}u^2 u^2_4dx.
\end{gather*}

\begin{proposition}\label{proposition:DR 3-spin}
The Hamiltonian $\og^{\text{$3$-spin}}_{1,1}$ for the $3$-spin theory equals
\begin{gather*}
\og^{\text{$3$-spin}}_{1,1}=\int\left(\frac{(u^1)^2 u^2}{2}+\frac{(u^2)^4}{36}+\eps^2\left(\frac{(u^2)^2 u^2_2}{48}+\frac{u^1 u^1_2}{12}\right)+\frac{\eps^4}{432}u^2 u^2_4\right)dx.
\end{gather*}
\end{proposition}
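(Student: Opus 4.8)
The plan is to compute $\og^{\text{$3$-spin}}_{1,1}$ by working through, one at a time, each of the finitely many pairs $(g,n_1,n_2)$ listed above for $r=3$, exactly as was done in the worked example for $(g,n_1,n_2)=(2,0,2)$. For each such contribution the recipe is: (i) write out the relevant monomial in the $p^k_a$'s with its combinatorial prefactor $\frac{(-\eps^2)^g}{n!}\binom{n}{n_1,n_2}$ from \eqref{DR Hamiltonians}, keeping track of the symmetrization over the $\alpha_i$'s; (ii) expand $\DR_g$ restricted to compact type via Hain's formula \eqref{eq:Hain's formula,second} (with the dilaton-equation simplification absorbing the $\psi_1$-insertion into the factor $2g-2+n$); (iii) discard all boundary terms $\delta_h^J$ that kill the $r$-spin class by the factorization property together with the selection rule \eqref{selecrule}, as in the example where $c^{\text{$3$-spin}}_{2,2}(e_2^{\otimes 2})\cdot\delta_1^{\{1\}}=0$; (iv) reduce what remains to a sum of honest Hodge integrals on $\oM_{g,n'}$ involving only $\psi$-classes, $\lambda_g$, and Witten's class; (v) evaluate those numbers using Theorem~\ref{thmprodvirt}, Chiodo's formula \eqref{grrorb}, and the computer program of \cite{computerprogram,PhDJG}; and finally (vi) translate the resulting polynomial in the $a_j$'s back into a differential polynomial in the $u^\alpha_i$ via the change of variables \eqref{eq:u-p change}, i.e. $\sum_a a^{b}p^{\alpha}_a p^{\beta}_{-a}$ becomes (up to sign and the $i^b$ factors) $u^\alpha u^\beta_b$ modulo $\partial_x$.

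Concretely, I would organize the contributions by genus. The $g=0$ terms produce the $\eps^0$ part of $\og^{\text{$3$-spin}}_{1,1}$: here $\DR_0$ is trivial (degree zero), so these are just genus-zero $r$-spin correlators, and one expects the classical terms $\frac{(u^1)^2u^2}{2}$ (from $(n_1,n_2)=(2,1)$, reflecting that $e_1$ is the unit and the pairing is $(e_k,e_l)=\delta_{k+l,3}$) and $\frac{(u^2)^4}{36}$ (from $(n_1,n_2)=(0,4)$). The $g=1$ terms, carrying $\eps^2$, come from $(n_1,n_2)\in\{(0,3),(2,0)\}$ and after Hain's formula reduce to genus-one Hodge integrals with $\lambda_1$; these should yield $\frac{(u^2)^2u^2_2}{48}$ and $\frac{u^1u^1_2}{12}$. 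The single $g=2$ term with $\eps^4$, namely $(0,2)$, is precisely the worked example and contributes $\frac{\eps^4}{432}u^2u^2_4$. Assembling these six pieces gives the claimed formula.

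The main obstacle is not conceptual but computational and bookkeeping in nature: getting the numerical Hodge integrals right for each term (these depend on the nontrivial input of Theorem~\ref{thmprodvirt} combined with Chiodo's formula, and were obtained by machine), and — equally error-prone — correctly handling the combinatorial weights and the symmetrization over marked points when passing from \eqref{DR Hamiltonians} to differential polynomials, together with the sign $(-\eps^2)^g$ and the factors of $i$ in \eqref{eq:u-p change}. One must also be careful, for each boundary divisor appearing after expanding the $g$-th power in Hain's formula, to check whether it survives against $c^{\text{$r$-spin}}_{g,n}$; the selection rule \eqref{selecrule} forces the multiplicities on the two branches of a node to be compatible, and since all $m_i\ge 1$ this eliminates most separating nodes, but each case needs verification. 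Since all the individual steps are routine applications of tools already recalled in this section, I would present the computation of the $(2,0,2)$ term in full (as above) as a representative sample and then simply tabulate the analogous outputs for the remaining $(g,n_1,n_2)$, summing them to obtain $\og^{\text{$3$-spin}}_{1,1}$.
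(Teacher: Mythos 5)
Your proposal matches the paper's method exactly: the paper enumerates the same finitely many contributions $(g,n_1,n_2)$ for $r=3$, applies the dilaton simplification and Hain's formula with the selection-rule vanishing of boundary terms, evaluates the resulting Hodge integrals via Theorem~\ref{thmprodvirt} and Chiodo's formula by computer, and its stated proof of the proposition is precisely ``a direct computation using the computer program,'' with the $(2,0,2)$ term worked out as the representative example. No discrepancy to report.
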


\begin{proof}
It is a direct computation using the computer program \cite{computerprogram,PhDJG}.
\end{proof}

With the same method, we compute the Hamiltonians~$\og^{\text{$4$-spin}}_{1,1}$ and~$\og^{\text{$5$-spin}}_{1,1}$ for the~$4$ and $5$-spin theories.

\begin{proposition}\label{proposition:DR 4-spin}
The Hamiltonian $\og^{\text{$4$-spin}}_{1,1}$ for the $4$-spin theory equals
\begin{align*}
\og^{\text{$4$-spin}}_{1,1}=&\int\left[\frac{(u^1)^2u^3}{2}+\frac{u^1(u^2)^2}{2}+\frac{(u^2)^2(u^3)^2}{8}+\frac{(u^3)^5}{320}+\right.\\
&\phantom{\int a}\eps^2\left(\frac{1}{8}u^1u^1_2+\frac{1}{64}u^3_2(u^2)^2+\frac{1}{16}u^3u^2u^2_2+\frac{1}{64}u^1_2(u^3)^2+\frac{1}{192}(u^3)^3u^3_2\right)+\\
&\phantom{\int a}\eps^4\left(\frac{1}{160}u^2u^2_4+\frac{5}{4096}(u^3)^2u^3_4+\frac{3}{640}u^1u^3_4\right)+\\
&\phantom{\int a}\eps^6\left.\frac{1}{8192}u^3u^3_6\right]dx.
\end{align*}
\end{proposition}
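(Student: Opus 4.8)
The plan is to carry out exactly the same computation as for Proposition~\ref{proposition:DR 3-spin}, now for $r=4$. First I would enumerate the contributions: the inequality~\eqref{eq:inequality} with $\alpha=1$, $d=1$, $r=4$ forces $g+n\le 5$, and combined with $n\ge 2$ this gives $g\le 3$; the admissible tuples $(g,n_1,n_2,n_3)$ are precisely the ones listed above (those satisfying~\eqref{selec3} with $2g-2+\sum_k(5-k)n_k=8$ and $\sum_k n_k\ge 2$). For each such tuple the Hamiltonian~$\og^{\text{$4$-spin}}_{1,1}$ receives the monomial contribution
\[
\frac{(-\eps^2)^g}{n!}\,\binom{n}{n_1,n_2,n_3}\sum_{\substack{a_1,\dots,a_n\in\mbZ\\\sum a_i=0}}\left(\int_{\DR_g(0,a_1,\dots,a_n)}\lambda_g\psi_1\,c^{\text{$4$-spin}}_{g,n+1}\Bigl(e_1\otimes\textstyle\bigotimes_k e_k^{\otimes n_k}\Bigr)\right)\prod_i p^{\alpha_i}_{a_i},
\]
and by the dilaton equation this equals $(2g-2+n)$ times the corresponding integral of $\lambda_g c^{\text{$4$-spin}}_{g,n}(\bigotimes_k e_k^{\otimes n_k})$ against Hain's expression~\eqref{eq:Hain's formula,second}.

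Next I would evaluate each such integral. The key tool is Theorem~\ref{thmprodvirt}: one writes $(-1)^g\lambda_g\cvir=\lim_{t\to 1}\fc_t(-R^\bullet\pi_*\cL)\,\fc_{t^{-r}}(\mathbb E^\vee)$, expands $\fc_t$ via~\eqref{cvirx}--\eqref{parametresl} in terms of Chern characters, substitutes Chiodo's formula~\eqref{grrorb} for $\Ch(R^\bullet\pi_*\cL)$ and Mumford's formula for $\Ch(\mathbb E)$, and integrates the resulting tautological class against the power of the Hain polynomial over $\oM_{g,n}$, using that $\lambda_g$ kills everything outside compact type and that $c^{\text{$4$-spin}}_{g,n}$ kills boundary strata violating the selection rule~\eqref{selecrule} (exactly the simplification $c^{\text{$3$-spin}}_{2,2}(e_2^{\otimes2})\cdot\delta_1^{\{1\}}=0$ used above). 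In practice this is done by the computer program~\cite{computerprogram,PhDJG}; the genus-zero terms give the polynomial part $\frac{(u^1)^2u^3}{2}+\frac{u^1(u^2)^2}{2}+\frac{(u^2)^2(u^3)^2}{8}+\frac{(u^3)^5}{320}$, the $g=1$ tuples give the $\eps^2$-terms, $g=2$ the $\eps^4$-terms, and the single tuple $(3,0,0,2)$ gives the top $\eps^6$-term $\frac{1}{8192}u^3u^3_6$.

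Finally I would reassemble: for each tuple the Hain polynomial in the $a_i$ is homogeneous of degree $2g$, and the rule $\sum_{\vec b}P^{\vec b}a_1^{b_1}\cdots a_n^{b_n}\mapsto u^{\alpha_1}_{b_1}\cdots u^{\alpha_n}_{b_n}$ from Section~\ref{section:DR hierarchy} converts the sum over $a_i$ into a differential polynomial; collecting all tuples and dividing by the symmetry factors produces the stated local functional. The main obstacle is purely the bookkeeping and numerics: there are far more tuples than in the $3$-spin case, the Chiodo/Mumford expansion of $\fc_t$ must be pushed to the relevant Chern-character degrees ($\le 2g$ plus the degree of $\psi_1$ and of the Hain factor), and one must correctly handle the boundary contributions $\delta_h^J$ in Hain's formula that survive the $\lambda_g$ restriction and the selection rule — but there is no conceptual difficulty beyond what was already illustrated for the $(2,0,2)$ term, so the proof reduces, as stated, to a direct computation with~\cite{computerprogram,PhDJG}.

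\begin{proof}
It is a direct computation using the computer program \cite{computerprogram,PhDJG}, following the method explained above for the $3$-spin theory.
\end{proof}
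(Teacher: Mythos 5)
Your proposal follows exactly the paper's own route: the paper proves this proposition "with the same method" as Proposition~\ref{proposition:DR 3-spin}, i.e.\ enumerating the admissible tuples $(g,n_1,n_2,n_3)$ via~\eqref{selec3}, reducing each contribution with the dilaton equation and Hain's formula~\eqref{eq:Hain's formula,second}, evaluating the Hodge integrals through Theorem~\ref{thmprodvirt} together with Chiodo's formula~\eqref{grrorb}, and finishing by a direct computation with the program \cite{computerprogram,PhDJG}. Your enumeration and degree bounds are correct, so the argument matches the paper's proof.
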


\begin{rem}
Above Hamiltonians~$\og^{\text{$3$-spin}}_{1,1}$ and $\og^{\text{$4$-spin}}_{1,1}$ are equal to the computations done in~\cite[Examples 4.3 and 4.4]{BR14}. The approach there is different and uses recursion properties of the double ramification hierarchy. Our computation can be seen as a second check. However, the Hamiltonian~$\og^{\text{$5$-spin}}_{1,1}$ of the following proposition is new.
\end{rem}

\begin{proposition}\label{proposition:DR 5-spin}
The Hamiltonian $\og^{\text{$5$-spin}}_{1,1}$ for the $5$-spin theory equals
\begin{align*}
\og^{\text{$5$-spin}}_{1,1}=&\int\left[\frac{(u^1)^2 u^4}{2}+u^1 u^2 u^3+\frac{(u^2)^3}{6}+\frac{(u^3)^4}{30}+\frac{u^2(u^3)^2 u^4}{5}+\frac{(u^2)^2 (u^4)^2}{10}+\frac{(u^3)^2 (u^4)^3}{50}+\frac{(u^4)^6}{3750}+\right.\\
&\phantom{\int a}\eps^2\left(\frac{1}{6}u^1u^1_2+\frac{3}{20}u^2 u^3 u^3_2+\frac{1}{10}u^2(u^3_1)^2+\frac{1}{20}u^1_2 u^3 u^4+\frac{1}{10}u^2 u^2_2 u^4+\frac{1}{40} (u^2_1)^2 u^4\right. \\
&\left.\phantom{\int a \hbar}+\cfrac{1}{50} u^2 u^4(u^4_1)^2+\frac{1}{75} u^2 (u^4)^2 u^4_2+\frac{1}{75}(u^3)^2 u^4u^4_2+\frac{1}{50} u^3 u^3_2(u^4)^2+\frac{1}{1200}(u^4)^4 u^4_2\right)+\\
&\phantom{\int a}\eps^4\left(\frac{7}{600}u^2 u^2_4+\frac{11}{900}u^1 u^3_4+\frac{7}{1200}u^2 u^4 u^4_4+\frac{17}{1200}u^2 u^4_1 u^4_3+\frac{71}{7200}u^2 (u^4_2)^2+\frac{31}{3600}u^3 u^3_4 u^4\right.\\
&\left.\phantom{\int a \hbar^2}+\frac{7}{450}u^3_1 u^3_3 u^4+\frac{91}{7200}(u^3_2)^2 u^4+\frac{13}{12000}(u^4_2)^2(u^4)^2+\frac{3}{4000}u^4_2 (u^4_1)^2 u^4\right)+\\
&\phantom{\int a}\eps^6\left(\frac{53}{108000}u^3 u^3_6+\frac{11}{18000}u^2 u^4_6+\frac{1397}{6480000}(u^4_3)^2 u^4+\frac{617}{1620000}u^4_4 u^4_2 u^4\right)+\\
&\phantom{\int a}\eps^8\left.\frac{107}{10800000}u^4 u^4_8\right]dx.
\end{align*}
\end{proposition}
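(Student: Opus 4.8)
The plan is to run the algorithm of Section~\ref{section:DR hierarchy for r-spin} with $r=5$, $\alpha=1$, $d=1$, contribution by contribution. By the degree constraint~\eqref{selec3} with $n\ge 2$ the only nonzero contributions come from the tuples $(g,n_1,n_2,n_3,n_4)$ listed just before this proposition, with $g$ ranging over $0,1,2,3,4$; for each such tuple the Hamiltonian acquires the summand
\[
\frac{(-\eps^2)^g}{n_1!\,n_2!\,n_3!\,n_4!}\sum_{\substack{a_1,\dots,a_n\in\mbZ\\ a_1+\dots+a_n=0}}\left(\int_{\DR_g(0,a_1,\dots,a_n)}\lambda_g\psi_1\,c^{\text{$5$-spin}}_{g,n+1}\Bigl(e_1\otimes\bigotimes_{k=1}^{4}e_k^{\otimes n_k}\Bigr)\right)\prod_{i=1}^{n}p^{\alpha_i}_{a_i},
\]
where $n=n_1+n_2+n_3+n_4$ and the $\alpha_i$ realize the prescribed multiplicities. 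For each tuple I would first apply the dilaton equation exactly as in Section~\ref{cg11}, replacing $\psi_1$ by the scalar $2g-2+n$ and reducing the coefficient to $\frac{2g-2+n}{g!}$ times an integral over $\oM_{g,n}$ of $\lambda_g\,c^{\text{$5$-spin}}_{g,n}$ against the $g$-th power of the Hain polynomial~\eqref{eq:Hain's formula,second} in the $a_j$.

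\textbf{Evaluation of the Hodge integrals.}
Expanding that $g$-th power produces a sum of monomials in the $a_j$ whose coefficients are intersection numbers of the form $\int_{\oM_{g,n}}\lambda_g\,c^{\text{$5$-spin}}_{g,n}(\cdots)\,\psi_{i_1}\cdots\psi_{i_g}$ together with the analogues carrying $\delta_0^J$ and $\delta_h^J$ insertions. To evaluate these I would use Theorem~\ref{thmprodvirt}, by which $\lambda_g\cvir(m_1,\dots,m_n)_{g,n}=(-1)^g\lim_{t\to 1}\fc_t(-R^\bullet\pi_*(\cL))\fc_{t^{-5}}(\mathbb{E}^\vee)$, combined with~\eqref{dfnCohFT}, which presents $\lambda_g\,c^{\text{$5$-spin}}_{g,n}$ as $(-1)^{\degvir}r^{1-g}$ times the pushforward of this limit to $\oM_{g,n}$ (projection formula for $\lambda_g$). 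Chiodo's formula~\eqref{grrorb}, together with Mumford's formula for $\Ch\mathbb{E}$, rewrites the argument of $\fc_t$ in terms of $\kappa$-classes, $\psi$-classes and boundary pushforwards on $\sS^5_{g,n}(m_1,\dots,m_n)$; one then expands $\fc_t$ via~\eqref{cvirx}--\eqref{parametresl} and passes to the limit $t\to 1$. A crucial simplification, already visible in the $3$-spin example preceding Proposition~\ref{proposition:DR 3-spin}, is that most boundary terms $\delta_h^J$ of Hain's formula are annihilated after multiplication by $c^{\text{$5$-spin}}_{g,n}(\cdots)$: by the factorization axiom of the cohomological field theory together with the selection rule~\eqref{selecrule}, a separating node contributes only when its two branches carry compatible multiplicities, which eliminates the vast majority of pairs $(h,J)$. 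After these reductions every surviving integral is a concrete tautological intersection number on $\sS^5_{g,n}$, computable with the program of~\cite{computerprogram,PhDJG}.

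\textbf{Assembly and the main obstacle.}
Finally I would convert the resulting expressions $\sum_{\sum a_i=0}(\text{polynomial in the }a_i)\prod p^{\alpha_i}_{a_i}$ into local functionals via the change of variables~\eqref{eq:u-p change}, as described after~\eqref{DR Hamiltonians}: a monomial $\prod a_i^{b_i}\prod p^{\alpha_i}_{a_i}$ with $\sum b_i=2g$ goes to $\pm\prod u^{\alpha_i}_{b_i}$, the sign being absorbed into $(-\eps^2)^g$, and one is free to symmetrize in the $a_i$ and integrate by parts since the ambiguity in the Hain polynomial does not affect the class in $\hLambda_5^{[0]}$. Summing the contributions of all the tuples and reducing modulo $\partial_x$ yields the stated formula. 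I expect the main difficulty to be one of scale and bookkeeping rather than of principle: for $r=5$ one must treat genus up to $4$ and up to six $p$-insertions, so $(\text{Hain polynomial})^g$ unfolds into many boundary strata and Chiodo's formula on $\sS^5_{g,n}$ with these multiplicities produces long tautological expressions with many rational coefficients; controlling this combinatorics — exactly what the computer implementation automates — is where essentially all the work lies, while each individual step is routine given Theorem~\ref{thmprodvirt}, Hain's formula, and Chiodo's formula.
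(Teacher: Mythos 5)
Your proposal follows essentially the same route as the paper: the paper's proof is precisely the algorithm of Section~\ref{section:DR hierarchy for r-spin} (selection rule to enumerate the tuples $(g,n_1,\dots,n_4)$, the dilaton-equation reduction, Hain's formula on compact type, Theorem~\ref{thmprodvirt} with Chiodo's formula) carried out by the computer program of~\cite{computerprogram,PhDJG}, exactly as you describe. The only nitpick is notational: the local functionals live in $\hLambda_4^{[0]}$ (with $N=r-1=4$ variables), not $\hLambda_5^{[0]}$.
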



\section{Dubrovin--Zhang hierarchy for the $r$-spin theory}\label{section:DZ hierarchy for r-spin}

In this section we review the description of the Dubrovin--Zhang hierarchy for the $r$-spin theory (\cite{Witten2,FSZ,DZ05}). In Section~\ref{subsection:Miura transformations} we discuss Miura transformations of hamiltonian hierarchies and fix some notations. In Section~\ref{subsection:pseudo-differential operators} we recall basic facts about pseudo-differential operators. In Section~\ref{subsection:GD hierarchy} we review the construction of the $r$-th Gelfand--Dickey hierarchy. In Section~\ref{subsection:DZ for r-spin} we describe the Dubrovin--Zhang hierarchy for the $r$-spin theory and do some explicit computations for $r=2,3,4,5$.

\subsection{Miura transformations in the theory of hamiltonian hierarchies}\label{subsection:Miura transformations}

Here we want to discuss changes of variables in the theory of hamiltonian hierarchies and introduce appropriate notations. We recommend the reader the paper~\cite{DZ05} for a more detailed introduction to this subject. 

First of all, let us modify our notations a little bit. Recall that by $\cA_N$ we denoted the ring of differential polynomials in the variables $u^1,\ldots,u^N$. Since we are going to consider rings of differential polynomials in different variables, we want to see the variables in the notation. So for the rest of the paper we denote by~$\cA_{u^1,\ldots,u^N}$ the ring of differential polynomials in variables~$u^1,\ldots,u^N$. The same notation is adopted for the extension $\hcA_{u^1,\ldots,u^N}$ and for the spaces of local functionals~$\Lambda_{u^1,\ldots,u^N}$ and~$\hLambda_{u^1,\ldots,u^N}$.

Consider changes of variables of the form
\begin{align}
&\tu^\alpha(u;u_x,u_{xx},\ldots;\eps)=u^\alpha+\sum_{k\ge 1}\eps^k f^\alpha_k(u;u_x,\ldots,u_k),\quad \alpha=1,\ldots,N,\label{eq:Miura transformation}\\
&f^\alpha_k\in\cA_{u^1,\ldots,u^N},\quad\deg f^\alpha_k=k.\label{eq:degree condition}
\end{align}
They are called Miura transformations. It is not hard to see that they are invertible.

Any differential polynomial $f(u)\in\hcA_{u^1,\ldots,u^N}$ can be rewritten as a differential polynomial in the new variables $\tu^\alpha$. The resulting differential polynomial is denoted by $f(\tu)$. The last equation in line~\eqref{eq:degree condition} garanties that, if $f(u)\in\hcA_{u^1,\ldots,u^N}^{[d]}$, then $f(\tu)\in\hcA_{\tu^1,\ldots,\tu^N}^{[d]}$. In other words, a Miura transformation defines an isomorphism $\hcA_{u^1,\ldots,u^N}^{[d]}\simeq\hcA_{\tu^1,\ldots,\tu^N}^{[d]}$. In the same way any Miura transformation identifies the spaces of local functionals $\hLambda^{[d]}_{u^1,\ldots,u^N}$ and $\hLambda^{[d]}_{\tu^1,\ldots,\tu^N}$. For any local functional $\oh[u]\in\hLambda^{[d]}_{u^1,\ldots,u^N}$ the image of it under the isomorphism $\hLambda^{[d]}_{u^1,\ldots,u^N}\stackrel{\sim}{\to}\hLambda^{[d]}_{\tu^1,\ldots,\tu^N}$ is denoted by $\oh[\tu]\in\hLambda^{[d]}_{\tu^1,\ldots,\tu^N}$. 

Let us describe the action of Miura transformations on hamiltonian hierarchies. Suppose we have a hamiltonian system
\begin{gather}\label{eq:Hamiltonian system2}
\frac{\d u^\alpha}{\d\tau_i}=K^{\alpha\mu}\frac{\delta\oh_i[u]}{\delta u^\mu},\quad\alpha=1,\ldots,N,\quad i\ge 1,
\end{gather}
defined by a hamiltonian operator $K$ and a sequence of pairwise commuting local functionals $\oh_i[u]\in\hLambda^{[0]}_{u^1,\ldots,u^N}$, $\{\oh_i[u],\oh_j[u]\}_K=0$. Consider a Miura transformation~\eqref{eq:Miura transformation}. Then in the new variables~$\tu^\alpha$, the system~\eqref{eq:Hamiltonian system2} looks as follows:
\begin{align}
&\frac{\d\tu^\alpha}{\d\tau_i}=K_{\tu}^{\alpha\mu}\frac{\delta\oh_i[\tu]}{\delta \tu^\mu},\quad\text{where}\notag\\
&K_{\tu}^{\alpha\beta}=\sum_{p,q\ge 0}\frac{\d \tu^\alpha(u)}{\d u^\mu_p}\d_x^p\circ K^{\mu\nu}\circ(-\d_x)^q\circ\frac{\d \tu^\beta(u)}{\d u^\nu_q}.\label{eq:transformation of an operator}
\end{align}

\subsection{Pseudo-differential operators}\label{subsection:pseudo-differential operators}

The material of this and the next sections is borrowed from the book~\cite{Dic03}. 

Let us fix $r\ge 2$ and consider variables~$f_0,f_1,\ldots,f_{r-2}$. A pseudo-differential operator $A$ is a Laurent series
$$
A=\sum_{n=-\infty}^m a_n\d_x^n,
$$
where $m$ is an arbitrary integer and $a_n\in\cA_{f_0,f_1,\ldots,f_{r-2}}$ are differential polynomials. Let
\begin{gather*}
A_+:=\sum_{n=0}^m a_n\d_x^n,\qquad \res A:=a_{-1}.
\end{gather*}
The product of pseudo-differential operators is defined by the following commutation rule:
\begin{gather*}
\d_x^k\circ a:=\sum_{l=0}^\infty\frac{k(k-1)\ldots(k-l+1)}{l!}(\d_x^l a)\d_x^{k-l},
\end{gather*}
where $k\in\Z$ and $a\in\cA_{f_0,f_1,\ldots,f_{r-2}}$. For any $m\ge 2$ and a pseudo-differential operator~$A$ of the form
$$
A=\d_x^m+\sum_{n=1}^\infty a_n\d_x^{m-n},
$$
there exists a unique pseudo-differential operator $A^{\frac{1}{m}}$ of the form
$$
A^{\frac{1}{m}}=\d_x+\sum_{n=0}^\infty \widetilde{a}_n\d_x^{-n},
$$
such that $\left(A^{\frac{1}{m}}\right)^m=A$.

\subsection{Gelfand--Dickey hierarchy}\label{subsection:GD hierarchy}

Let 
$$
L:=\d_x^r+f_{r-2}\d_x^{r-2}+\ldots+f_1\d_x+f_0.
$$
The $r$-th Gelfand--Dickey hierarchy is the following system of partial differential equations: 
\begin{gather}\label{eq:GD hierarchy}
\frac{\d L}{\d T_m}=[(L^{m/r})_+,L],\quad m\ge 1.
\end{gather}
We immediately see that $\frac{\d L}{\d T_{rk}}=0$, so we can omit the times~$T_{rk}$. Since $(L^{1/r})_+=\d_x$, we have $\frac{\d f_i}{\d T_1}=(f_i)_x$. 

The Gelfand--Dickey hierarchy has two compatible hamiltonian structures. The second one is not needed in this paper, so we recall only the first one. Let $X_0,X_1,\ldots,X_{r-2}\in\cA_{f_0,\ldots,f_{r-2}}$ be some differential polynomials. Consider a pseudo-differential operator 
$$
X:=\d_x^{-(r-1)}\circ X_{r-2}+\ldots+\d_x^{-1}\circ X_0.
$$ 
It is easy to see that the positive part $[X,L]_+$ of the commutator has the following form:
$$
[X,L]_+=\sum_{0\le\alpha,\beta\le r-2}((K^\GD)^{\alpha\beta}X_\beta)\d_x^\alpha,
$$
where
$$
(K^{\GD})^{\alpha\beta}=\sum_{i\ge 0}(K^{\GD})^{\alpha\beta}_i\d_x^i,\quad (K^{\GD})^{\alpha\beta}_i\in\cA_{f_0,\ldots,f_{r-2}},
$$
are differential operators and the sum is finite. The operator $K^{\GD}=((K^{\GD})^{\alpha\beta})_{0\le\alpha,\beta\le r-2}$ is hamiltonian. Consider local functionals
$$
\oh_m^{\GD}:=-\frac{r}{m+r}\int \res L^{(m+r)/r}dx,\quad m\ge 1.
$$
We have 
$$
\left\{\oh^{\GD}_m,\oh^{\GD}_n\right\}_{K^{\GD}}=0.
$$
For a local functional $\oh\in\Lambda_{f_0,f_1,\ldots,f_{r-2}}$ define a pseudo-differential operator $\frac{\delta\oh}{\delta L}$ by
$$
\frac{\delta\oh}{\delta L}:=\d_x^{-(r-1)}\circ\frac{\delta\oh}{\delta f_{r-2}}+\ldots+\d_x^{-1}\circ\frac{\delta\oh}{\delta f_0}.
$$
Then the right-hand side of~\eqref{eq:GD hierarchy} can be written in the following way: 
$$
[(L^{m/r})_+,L]=\left[\frac{\delta\oh^{\GD}_m}{\delta L},L\right]_+=\sum_{0\le\alpha,\beta\le r-2}\left((K^\GD)^{\alpha\beta}\frac{\delta\oh^{\GD}_m}{\delta f_\beta}\right)\d_x^\alpha.
$$
Therefore, the sequence of local functionals $\oh^{\GD}_m$ together with the hamiltonian operator~$K^{\GD}$ define a hamiltonian structure of the Gelfand--Dickey hierarchy~\eqref{eq:GD hierarchy}.

\subsection{Dubrovin--Zhang hierarchy for the $r$-spin theory}\label{subsection:DZ for r-spin}

Introduce new variables $w^1,\ldots,w^{r-1}$ by
$$
w^\alpha=\frac{1}{(r-\alpha)(-r)^{\frac{r-\alpha-1}{2}}}\res L^{(r-\alpha)/r}.
$$
Define a hamiltonian operator $K^{\text{$r$-spin}}=((K^{\text{$r$-spin}})^{\alpha\beta})_{1\le\alpha,\beta\le r-1}$ and local functionals $\oh^{\text{$r$-spin}}_{\alpha,d}\in\Lambda_{w^1,\ldots,w^{r-1}}$, $1\le\alpha\le r-1$, $d\ge 0$, by
\begin{align*}
&K^{\text{$r$-spin}}:=(-r)^{\frac{r}{2}}K^{\GD}_w,\\
&\oh_{\alpha,d}^{\text{$r$-spin}}:=\frac{1}{(-r)^{\frac{r+k-1}{2}-d}k!_r}\oh_k^{\GD}[w],
\end{align*}
where $k:=\alpha+rd$ and $k!_r:=\prod_{i=0}^d(\alpha+ri)$. Recall that $K^\GD_w$ denotes the Miura transform of the operator $K^{\GD}$ that is described by formula~\eqref{eq:transformation of an operator}. Then the Dubrovin--Zhang hierarchy for the $r$-spin theory is given by the sequence of local functionals~$\oh^{\text{$r$-spin}}_{\alpha,d}$ and the hamiltonian operator~$K^{\text{$r$-spin}}$. 

\subsection{Examples}\label{subsection:examples}

Here we compute the Hamiltonian~$\oh^{\text{$r$-spin}}_{1,1}$ and the operator~$K^{\text{$r$-spin}}$ for $r=2,3,4,5$. When we present the final answer in these cases, just for convenience, we recover the parameter~$\eps$.

\subsubsection{$2$-spin theory}

Denote $f_0$ by $f$ and $w^1$ by $w$. We compute
\begin{align*}
L=&\d_x^2+f,\\
\res L^{5/2}=&\frac{5}{16}f^3+\frac{5}{32}f_x^2+\frac{5}{16}ff_{xx}+\frac{1}{32}f_{xxxx},\\
\oh_3^{\GD}=&\int\left(-\frac{1}{8}f^3-\frac{1}{16}ff_{xx}\right)dx,\\
K^{\GD}=&-2\d_x.
\end{align*}
The variable $w$ is related to the variable $f$ by $w=\frac{f}{2}$. As a result, for the Dubrovin--Zhang hierarchy we get
\begin{align*}
K^{\text{$2$-spin}}=&\d_x,\\
\oh^{\text{$2$-spin}}_{1,1}=&\int\left(\frac{w^3}{6}+\eps^2\frac{w w_{xx}}{24}\right)dx.
\end{align*}

\subsubsection{$3$-spin theory}

We have
\begin{align*}
L=&\d_x^3+f_1\d_x+f_0,\\
K^{\GD}=&\begin{pmatrix}
0      & -3\d_x\\
-3\d_x & 0
\end{pmatrix},\\
\oh^{\GD}_4=&\int\left(-\frac{2}{9}f_0^2f_1+\frac{1}{81}f_1^4-\frac{1}{9}f_0(f_0)_{xx}+\frac{2}{9}f_0f_1(f_1)_x+\frac{1}{18}f_1^2(f_1)_{xx}+\frac{1}{9}f_0(f_1)_{xxx}\right.\\
&\phantom{\int (}\left.+\frac{1}{27}f_1(f_1)_{xxxx}\right)dx.
\end{align*}
The relation between the variables $w^1,w^2$ and $f_0,f_1$ looks as follows:
\begin{gather*}
\left\{
\begin{aligned}
&w^1=\frac{1}{2\sqrt{-3}}\left(\frac{2}{3}f_0-\frac{1}{3}(f_1)_x\right),\\
&w^2=\frac{f_1}{3}.
\end{aligned}
\right.
\end{gather*}
For the Dubrovin--Zhang hierarchy we obtain
\begin{align*}
K^{\text{$3$-spin}}=&
\begin{pmatrix}
0           & \d_x \\
\d_x & 0 
\end{pmatrix},\\
\oh^{\text{$3$-spin}}_{1,1}=&\int\left[\frac{(w^2)^4}{36}+\frac{w^2(w^1)^2}{2}+\eps^2\left(\frac{(w^2)^2w^2_{xx}}{48}+\frac{w^1w^1_{xx}}{12}\right)+\eps^4\frac{w^2w^2_{xxxx}}{432}\right]dx.
\end{align*}

\subsubsection{$4$-spin theory}

For the Dubrovin--Zhang hierarchy we get
\begin{align*}
K^{\text{$4$-spin}}=&\begin{pmatrix}
\frac{1}{48}\eps^2\d_x^3 & 0    & \d_x \\
0                        & \d_x & 0    \\
\d_x                     & 0    & 0
\end{pmatrix},\\
\oh_{1,1}^{\text{$4$-spin}}=&\int\left[{\frac{w^1 (w^2)^2}{2}+\frac{(w^1)^2 w^3}{2}+\frac{(w^2)^2 (w^3)^2}{8}+\frac{(w^3)^5}{320}}+\right. \\
&\phantom{\int a}\eps^2\left(\frac{w^1 w^1_2}{8}+\frac{w^1 w^3 w^3_2}{48}+\frac{w^1(w^3_1)^2}{32}+\frac{w^2 w^3 w^2_2}{12}+\frac{w^3 (w^2_1)^2}{48}+\frac{(w^3)^3 w^3_2}{64}+\frac{(w^3)^2(w^3_1)^2}{32}\right)+\\
&\phantom{\int a}\eps^4\left(\frac{w^2 w^2_4}{160}+\frac{w^1w^3_4}{480}+\frac{5}{4608}(w^3)^2w^3_4\right)+\\
&\phantom{\int a}\eps^6\left.\frac{w^3w^3_6}{11520}\right]dx.
\end{align*}

\subsubsection{$5$-spin theory}

For the Dubrovin--Zhang hierarchy we have
\begin{align*}
K^{\text{$5$-spin}}=&\begin{pmatrix}
0                        & \frac{1}{30}\eps^2\d_x^3 & 0    & \d_x \\
\frac{1}{30}\eps^2\d_x^3 & 0                        & \d_x & 0    \\
0                        & \d_x                     & 0    & 0    \\
\d_x                     & 0                        & 0    & 0   
\end{pmatrix},\\
\oh_{1,1}^{\text{$5$-spin}}=&\int\left[\frac{\left(w^1\right)^2 w^4}{2}+w^1 w^2 w^3+\frac{\left(w^2\right)^3}{6}+\frac{\left(w^2\right)^2 \left(w^4\right)^2}{10}+\frac{w^2\left(w^3\right)^2 w^4}{5}+\frac{\left(w^3\right)^4}{30}+\frac{\left(w^3\right)^2 \left(w^4\right)^3}{50}\right.\\
&\phantom{\int a}+\frac{\left(w^4\right)^6}{3750}+\\
&\phantom{\int a}\eps^2\left(\frac{w^4_2\left(w^4\right)^4}{1200}+\frac{w^4_2 w^2 \left(w^4\right)^2}{100}+\frac{w^3_2 w^3 \left(w^4\right)^2}{50}+\frac{\left(w^2_1\right)^2 w^4}{120}+\frac{w^4_2 \left(w^3\right)^2 w^4}{100}+\frac{\left(w^4_1\right)^2 w^2 w^4}{50}\right.\\
&\phantom{\int a \eps^2}+\left.\frac{w^2_2 w^2 w^4}{12}+\frac{w^1_2 w^3 w^4}{30}+\frac{w^1_2 w^1}{6}+\frac{w^3_1 w^4_1 w^1}{30}+\frac{\left(w^3_1\right)^2 w^2}{10}+\frac{2}{15} w^3_2 w^2 w^3\right)+\\
&\phantom{\int a}\eps^4\left(\frac{w^4_4 \left(w^4\right)^3}{14400}+\frac{49\left(w^4_2\right)^2 \left(w^4\right)^2}{72000}+\frac{13\left(w^3_2\right)^2 w^4}{1800}+\frac{7}{900}w^3_1 w^3_3 w^4+\frac{1}{300} w^4_4 w^2 w^4\right.\\
&\phantom{\int a \eps^4}+\left.\frac{1}{180} w^3_4 w^3 w^4+\frac{1}{150} w^3_4 w^1+\frac{1}{120}\left(w^4_2\right)^2 w^2+\frac{7}{600}w^2_4 w^2+\frac{7}{600}w^4_1 w^4_3 w^2\right)+\\
&\phantom{\int a}\eps^6\left(\frac{178 w^4 \left(w^4_3\right)^2}{10125}-\frac{589 w^4_6 \left(w^4\right)^2}{135000}+\frac{w^4_6 w^2}{4500}+\frac{w^3_6 w^3}{3000}+\frac{1069 w^4_2 w^4_4 w^4}{40500}\right)+\\
&\phantom{\int a}\eps^8\left.\left(\frac{w^4_8 w^4}{337500}\right)\right]dx.
\end{align*}


\section{Proof of Theorem~\ref{theorem:main}}\label{section:proof}

Before proving Theorem~\ref{theorem:main} we present two simple general results that, we believe, have an independent interest. In Section~\ref{subsection:dilaton equation} we prove that the string solution of an arbitrary double ramification hierarchy satisfies the dilaton equation. In Section~\ref{subsection:reconstruction} we prove that under some assumptions a hamiltonian hierarchy can be reconstructed from its dispersionless part and only one Hamiltonian. Finally, in Section~\ref{subsection:final proof} we prove Theorem~\ref{theorem:main}.

\subsection{Dilaton equation for the string solution}\label{subsection:dilaton equation}

Consider an arbitrary cohomological field theory, $c_{g,n}\colon V^{\otimes n}\to H^{\even}(\oM_{g,n};\mbC)$, and the associated double ramification hierarchy. As usual, we denote by~$\og_{\alpha,d}$ its Hamiltonians. Let $(u^{\str})^\alpha(x,t^*_*;\eps)$ be the string solution of the double ramification hierarchy ~(see~\cite{Bur14}). Recall that it is defined as a unique solution that satisfies the initial condition~$\left.(u^{\str})^\alpha\right|_{t^*_*=0}=\delta^{\alpha,1}x$.

\begin{proposition}\label{proposition:dilaton}
We have 
\begin{gather}\label{eq:dilaton}
\frac{\d(u^{\str})^\alpha}{\d t^1_1}-\eps\frac{\d(u^{\str})^\alpha}{\d\eps}-x\frac{\d(u^{\str})^\alpha}{\d x}-\sum_{n\ge 0}t^\gamma_n\frac{\d (u^{\str})^\alpha}{\d t^\gamma_n}=0.
\end{gather}
\end{proposition}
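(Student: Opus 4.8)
The plan is to realize the two sides of~\eqref{eq:dilaton} as solutions of the linearized double ramification hierarchy along $(u^{\str})^\alpha$, and then to identify them by comparing their values at $t^*_*=0$. Write the hierarchy as $\frac{\d u^\alpha}{\d t^\beta_q}=P^\alpha_{\beta,q}$ with $P^\alpha_{\beta,q}:=\eta^{\alpha\mu}\d_x\frac{\delta\og_{\beta,q}}{\delta u^\mu}$, and set
\[
w^\alpha:=\frac{\d(u^{\str})^\alpha}{\d t^1_1},\qquad v^\alpha:=\eps\frac{\d(u^{\str})^\alpha}{\d\eps}+x\frac{\d(u^{\str})^\alpha}{\d x}+\sum_{n\ge0}t^\gamma_n\frac{\d(u^{\str})^\alpha}{\d t^\gamma_n},
\]
so that~\eqref{eq:dilaton} is precisely the assertion $w^\alpha=v^\alpha$.

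I would first record two essentially formal facts. \emph{Uniqueness:} as in~\cite{Bur14}, a solution of the double ramification hierarchy is determined by its restriction to $t^*_*=0$, since the mixed $t$-derivatives at $t^*_*=0$ are recovered inductively from the equations $\d_{t^\beta_q}u^\alpha=P^\alpha_{\beta,q}$; the same reconstruction applies to the linearized system $\d_{t^\beta_q}\phi^\alpha=\sum_{\gamma,i}\big(\frac{\d P^\alpha_{\beta,q}}{\d u^\gamma_i}\big|_{u=u^{\str}}\big)\d_x^i\phi^\gamma$, so a solution of it whose restriction to $t^*_*=0$ vanishes is identically zero. \emph{Two linearized solutions:} since partial derivatives commute and $(u^{\str})$ solves the hierarchy, $w^\alpha=\d_{t^1_1}(u^{\str})^\alpha$ solves the linearized system. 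For $v^\alpha$, recall that $\og_{\beta,q}\in\hLambda_N^{[0]}$ is homogeneous of degree $0$ for $\deg u^\alpha_i=i$, $\deg\eps=-1$; hence $P^\alpha_{\beta,q}=\sum_{g\ge0}\eps^{2g}P^\alpha_{\beta,q,g}$ with $P^\alpha_{\beta,q,g}$ homogeneous of degree $2g+1$ in the $u^\gamma_i$, and a short computation using this homogeneity shows that $(u^{\str})^\alpha(\lambda x,\lambda t^*_*;\lambda\eps)$ solves the hierarchy for every $\lambda\in\C^\times$. Differentiating this one-parameter family of solutions at $\lambda=1$ shows that $v^\alpha$ solves the linearized system.

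By these facts the proof reduces to $w^\alpha|_{t^*_*=0}=v^\alpha|_{t^*_*=0}$. Since $(u^{\str})^\alpha|_{t^*_*=0}=\delta^{\alpha,1}x$ depends neither on $\eps$ nor on $t^*_*$, we get $v^\alpha|_{t^*_*=0}=x\,\delta^{\alpha,1}$; while $w^\alpha|_{t^*_*=0}=\eta^{\alpha\mu}\d_x\frac{\delta\og_{1,1}}{\delta u^\mu}$ evaluated at $u^\gamma_i=\d_x^i(\delta^{\gamma,1}x)$, i.e.\ at the point where $u^1=x$, $u^1_1=1$ and all other $u^\gamma_i$ vanish. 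So everything reduces to the identity
\[
\left.\frac{\delta\og_{1,1}}{\delta u^\mu}\right|_{u^\gamma=\delta^{\gamma,1}x}=\frac{\eta_{1\mu}}{2}\,x^2,
\]
which yields $w^\alpha|_{t^*_*=0}=\eta^{\alpha\mu}\d_x\big(\frac{\eta_{1\mu}}{2}x^2\big)=x\,\delta^{\alpha,1}$, as needed.

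To establish the displayed identity I would split $\og_{1,1}$ by powers of $\eps$. For the $\eps^0$ (genus-zero) part, the dilaton relation on $\oM_{0,n+1}$ — the same one already used to simplify~\eqref{DR integrals another} — rewrites the Hamiltonian density as $EF_0-2F_0$ with $F_0(u)=\sum_{n\ge3}\frac1{n!}\langle\tau_0(e_{\alpha_1})\cdots\tau_0(e_{\alpha_n})\rangle_0\,u^{\alpha_1}\cdots u^{\alpha_n}$ and $E=\sum_\gamma u^\gamma\frac{\d}{\d u^\gamma}$; then $\frac{\d}{\d u^\mu}(EF_0-2F_0)=(E-1)\frac{\d F_0}{\d u^\mu}$, and evaluating at $u=\delta^{\gamma,1}x$ and invoking the genus-zero string equation (which gives $\langle\tau_0(e_1)\tau_0(e_1)\tau_0(e_\mu)\rangle_0=\eta_{1\mu}$ and annihilates $\langle\tau_0(e_1)\prod\tau_0(\cdot)\rangle_0$ once there are at least four points) produces exactly $\frac12\eta_{1\mu}x^2$. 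For the $\eps^{2g}$ part with $g\ge1$, a monomial of $\frac{\delta\og_{1,1}}{\delta u^\mu}$ survives the evaluation only if it is a monomial in $u^1_0$ and $u^1_1$ alone; tracing this back through the unit axiom forces the relevant integrand to involve $c_{g,n}(e_\mu\otimes e_1^{\otimes(n-1)})=\pi^* c_{g,1}(e_\mu)$ with $n\ge2g+1$, and a dimension count (using Hain's formula~\cite{Hai11} for the degree-$2g$ class $\DR_g$) then forces $c_{g,1}(e_\mu)$ to be a top-degree class on $\oM_{g,1}$, whence $\lambda_g\cdot\pi^* c_{g,1}(e_\mu)=0$ by degree; so the positive-$\eps$ part does not contribute. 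The formal steps are routine, but this last point — the vanishing of all $\eps$-corrections to $\d_{t^1_1}(u^{\str})^\alpha|_{t^*_*=0}$ — is the only place where genuine geometric input (the dilaton and string equations of the cohomological field theory, together with the vanishing of $\lambda_g$ along fibers of forgetful morphisms) enters, and I expect that dimension-counting bookkeeping to be the delicate part to write out carefully.
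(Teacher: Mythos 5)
Your proposal is correct in substance, and the overall skeleton matches the paper's (show the dilaton combination solves the linearization of the hierarchy along $(u^{\str})^\alpha$, then conclude by uniqueness from the value at $t^*_*=0$); your way of producing the second linearized solution, by differentiating the rescaled family $(u^{\str})^\alpha(\lambda x,\lambda t^*_*;\lambda\eps)$ at $\lambda=1$, is just a repackaging of the Euler identity for the degree-$1$ homogeneity of $\eta^{\alpha\mu}\d_x\frac{\delta\og_{\beta,q}}{\delta u^\mu}$ that the paper applies directly to $O(u^{\str})^\alpha$. Where you genuinely diverge is the key initial condition $\left.\frac{\d(u^{\str})^\alpha}{\d t^1_1}\right|_{t^*_*=0}=\delta^{\alpha,1}x$: the paper imports it from the string-type equation (Eq.~(5.2) of \cite{BR14}) for the auxiliary solution $(v^{\str})^\alpha$ with $u^\alpha_d=v^\alpha_{d+1}$, differentiating that identity in $t^1_1$, whereas you prove the equivalent identity $\left.\frac{\delta\og_{1,1}}{\delta u^\mu}\right|_{u^\gamma=\delta^{\gamma,1}x}=\frac{\eta_{1\mu}}{2}x^2$ directly from the definition \eqref{DR Hamiltonians}, using the genus-zero string/dilaton equations and a higher-genus vanishing; your route is self-contained (no appeal to \cite{BR14}) at the price of the geometric bookkeeping you flag, while the paper's is shorter because that geometry is already packaged in the cited string equation. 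Two points to fix when writing yours out: (i) the bound $n\ge 2g+1$ does not come from the unit axiom but from homogeneity — the $\eps^{2g}$-part of $\frac{\delta\og_{1,1}}{\delta u^\mu}$ has differential degree $2g$, so a surviving monomial carries $2g$ factors $u^1_1$ and its source density monomial has one factor more; and (ii) the final vanishing is not a degree count on $\oM_{g,n+1}$ (for $n=2g+1$ the class $\lambda_g\psi_1\pi^*c_{g,1}(e_\mu)$ has exactly the dimension of the double ramification cycle): you must use that $\lambda_g$ is itself pulled back under the forgetful map, so $\lambda_g\,\pi^*c_{g,1}(e_\mu)=\pi^*\bigl(\lambda_g\,c_{g,1}(e_\mu)\bigr)$, which vanishes because $\deg\bigl(\lambda_g\,c_{g,1}(e_\mu)\bigr)$ exceeds $\dim_{\CC}\oM_{g,1}=3g-2$; it is also worth one sentence noting that the evaluated variational derivative is independent of the non-unique polynomial representative $P_{\alpha,d,g;\alpha_1,\ldots,\alpha_n}$, so the identically vanishing integrals may be represented by the zero polynomial. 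With these details supplied, your argument is a complete and valid alternative proof.
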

\begin{proof}
Let $O:=\frac{\d}{\d t^1_1}-\eps\frac{\d}{\d\eps}-x\frac{\d}{\d x}-\sum_{n\ge 0}t^\gamma_n\frac{\d}{\d t^\gamma_n}$. First of all, let us check that 
\begin{gather}\label{eq:initial1}
\left.(O(u^{\str})^{\alpha})\right|_{t^*_*=0}=0.
\end{gather}
We have
\begin{gather}\label{eq:tmp1}
\left.(O(u^{\str})^{\alpha})\right|_{t^*_*=0}=\left.\left(\frac{\d(u^{\str})^\alpha}{\d t^1_1}-x\frac{\d(u^{\str})^\alpha}{\d x}\right)\right|_{t^*_*=0}=\left.\left(\frac{\d(u^{\str})^\alpha}{\d t^1_1}\right)\right|_{t^*_*=0}-\delta^{\alpha,1}x.
\end{gather}
In order to proceed, we do the same trick, as in the proof of Lemma~5.1 in~\cite{BR14}. We consider new variables $v^\alpha_d$, $1\le\alpha\le N, d\ge 0$, such that $u^\alpha_d=v^\alpha_{d+1}$. Then we consider the following system of evolutionary PDEs:
\begin{gather*}
\frac{\d v^\alpha}{\d t^\beta_q}=\eta^{\alpha\mu}\frac{\delta\og_{\beta,q}}{\delta u^\mu}.
\end{gather*}
From the compatibility of the flows of the double ramification hierarchy it easily follows that this system is also compatible. Let $(v^{\str})^\alpha(x,t^*_*;\eps)$ be a unique solution that satisfies the initial condition $\left.(v^{\str})^\alpha\right|_{t^*_*=0}=\delta^{\alpha,1}\frac{x^2}{2}$. It satisfies the following equation~(see~\cite[Eq.~(5.2)]{BR14}):
\begin{gather*}
\frac{\d(v^{\str})^\alpha}{\d t^1_0}-\sum_{n\ge 0}t^\gamma_{n+1}\frac{\d(v^{\str})^\alpha}{\d t^\gamma_n}=t^\alpha_0+\delta^{\alpha,1}x.
\end{gather*}
Differentiating this equation by~$t^1_1$ and using the fact that $\frac{\d(v^{\str})^\alpha}{\d t^1_0}=\d_x(v^{\str})^\alpha=(u^\str)^\alpha$, we get
$$
\left.\left(\frac{\d(u^{str})^\alpha}{\d t^1_1}\right)\right|_{t^*_*=0}=\delta^{\alpha,1}x.
$$
Together with~\eqref{eq:tmp1} it proves~\eqref{eq:initial1}.

Let $f^\alpha_{\beta,q}:=\eta^{\alpha\mu}\d_x\frac{\delta\og_{\beta,q}}{\delta u^\mu}$. We have
\begin{multline}\label{eq:system}
\frac{\d}{\d t^\beta_q}O(u^{\str})^\alpha=O f^\alpha_{\beta,q}-f^\alpha_{\beta,q}=\sum_{n\ge 0}\frac{\d f^\alpha_{\beta,q}}{\d u^\gamma_n}O\d_x^n(u^{\str})^\gamma-\eps\frac{\d f^\alpha_{\beta,q}}{\d\eps}-f^\alpha_{\beta,q}=\\
=\sum_{n\ge 0}\frac{\d f^\alpha_{\beta,q}}{\d u^\gamma_n}\left(\d_x^n O(u^{\str})^\gamma+n\d_x^n (u^{\str})^\gamma\right)-\eps\frac{\d f^\alpha_{\beta,q}}{\d\eps}-f^\alpha_{\beta,q}.
\end{multline}
Since $f^\alpha_{\beta,q}\in\hcA_N^{[1]}$, we have 
$$
\sum_{n\ge 0}nu^\gamma_n\frac{\d f^\alpha_{\beta,q}}{\d u^\gamma_n}-\eps\frac{\d f^\alpha_{\beta,q}}{\d\eps}-f^\alpha_{\beta,q}=0.
$$
Therefore, from~\eqref{eq:system} we obtain
$$
\frac{\d}{\d t^\beta_q}O(u^{\str})^\alpha=\sum_{n\ge 0}\frac{\d f^\alpha_{\beta,q}}{\d u^\gamma_n}\d_x^n O(u^{\str})^\gamma,\qquad 1\le\alpha,\beta\le N,\quad q\ge 0.
$$ 
This system can be considered as a system of evolutionary partial differential equations for the power series $O(u^{\str})^\alpha$. Since the initial condition~\eqref{eq:initial1} is zero, we get $O(u^{\str})^\alpha=0$.  The proposition is proved.  
\end{proof}

\subsection{Dilaton equation and the reconstruction of the hierarchy}\label{subsection:reconstruction}

Suppose we have an arbitrary cohomological field theory in genus~$0$: $c_{0,n}\colon V^{\otimes n}\to H^{\even}(\oM_{0,n};\mbC)$, with a phase space~$V$ of dimension~$N$. Let $F_0(t^*_*)$ be its potential:
$$
F_0(t^*_*):=\sum_{n\ge 3}\frac{1}{n!}\sum_{d_1,\ldots,d_n\ge 0}\left(\int_{\oM_{0,n}}c_{0,n}\left(\bigotimes^n_{i=1}e_{\alpha_i}\right)\prod_{i=1}^n\psi_i^{d_i}\right)\prod_{i=1}^nt^{\alpha_i}_{d_i}.
$$
Let
$$
\Omega^{[0]}_{\alpha,p;\beta,q}(u):=\left.\frac{\d^2 F_0}{\d t^\alpha_p\d t^\beta_q}\right|_{\substack{t^*_{\ge 1}=0\\t^\gamma_0=u^\gamma}}\in\cA_{u^1,\ldots,u^N}.
$$
Consider the genus-zero Dubrovin--Zhang hierarchy associated to our cohomological field theory. It is also called the principal hierarchy. Recall that it is the hamiltonian hierarchy defined by the sequence of local functionals
$$
\oh^{[0]}_{\alpha,p}:=\int\Omega^{[0]}_{\alpha,p+1;1,0}dx,\qquad  1\le \alpha\le N,\quad p\ge 0,
$$ 
and the hamiltonian operator $\eta\d_x$. 

Consider now an arbitrary sequence of local functionals $\oh_{\alpha,p}\in\hLambda^{[0]}_N, 1\le\alpha\le N, p\ge 0$, such that
\begin{align*}
&\{\oh_{\alpha,p},\oh_{\beta,q}\}_{\eta\d_x}=0,\\
&\left.\oh_{\alpha,p}\right|_{\eps=0}=\oh^{[0]}_{\alpha,p},\\
&\oh_{1,0}=\oh^{[0]}_{1,0}.
\end{align*}
The local functionals~$\oh_{\alpha,p}$ and the hamiltonian operator $\eta\d_x$ define a hamiltonian hierarchy of PDEs that can be considered as a deformation of the principal hierarchy. The equations of this hierarchy are
\begin{gather}\label{eq:deformed hierarchy}
\frac{\d u^\alpha}{\d t^\beta_q}=\eta^{\alpha\mu}\d_x\frac{\delta\oh_{\beta,q}}{\delta u^\mu}.
\end{gather}
Let $(\spec)^\alpha(x,t^*_*;\eps)$ be a unique solution of the system~\eqref{eq:deformed hierarchy} specified by the initial condition
$$
\left.(\spec)^\alpha\right|_{t^*_*=0}=\delta^{\alpha,1}x.
$$
We call this solution the special solution.

\begin{proposition}\label{proposition:reconstruction}
Suppose the special solution $(\spec)^\alpha(x,t^*_*;\eps)$ satisfies the following equations:
\begin{align}
&\frac{\d(\spec)^\alpha}{\d t^1_0}-\sum_{n\ge 0}t^\gamma_{n+1}\frac{\d(\spec)^\alpha}{\d t^\gamma_n}=\delta^{\alpha,1},\quad\text{(string equation)},\label{eq:string for special}\\
&\frac{\d(\spec)^\alpha}{\d t^1_1}-\eps\frac{\d(\spec)^\alpha}{\d\eps}-x\frac{\d(\spec)^\alpha}{\d x}-\sum_{n\ge 0}t^\gamma_n\frac{\d(\spec)^\alpha}{\d t^\gamma_n}=0,\quad\text{(dilaton equation)}.\label{eq:dilaton for special}
\end{align}
Then all Hamiltonians $\oh_{\alpha,p}$ are uniquely determined by the Hamiltonian~$\oh_{1,1}$ and the dispersionless parts~$\left.\oh_{\beta,q}\right|_{\eps=0}=\oh^{[0]}_{\beta,q}$.
\end{proposition}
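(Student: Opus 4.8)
The plan is to reconstruct the special solution $(\spec)^\alpha(x,t^*_*;\eps)$ itself from the data consisting of $\oh_{1,1}$ and the dispersionless Hamiltonians $\oh^{[0]}_{\beta,q}$, and then to read off every $\oh_{\beta,q}$ from it. The reconstruction of $(\spec)$ will be carried out by a double induction: an outer induction on the power $k$ of $\eps$, and, for fixed $k$, an inner induction on the total degree $d$ in the grading $\deg x=\deg t^\mu_n=1$. In this recursion the dilaton equation provides at each step an invertible numerical multiplier, and the string equation keeps the recursion well founded.

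\emph{From $(\spec)$ to the Hamiltonians.} Suppose $(\spec)^\alpha$ is already known as a power series. Then for every $\beta,q$ equation \eqref{eq:deformed hierarchy} shows that the differential polynomial $\d_x\tfrac{\delta\oh_{\beta,q}}{\delta u^\mu}$, evaluated on $(\spec)$, equals the known series $\eta_{\mu\alpha}\d_{t^\beta_q}(\spec)^\alpha$. Since the evaluation map $f\mapsto f[(\spec)]$ from differential polynomials to power series in $x,t^*_*,\eps$ is injective (a standard fact, cf.\ the proof of \cite[Lemma~5.1]{BR14}), this recovers $\d_x\tfrac{\delta\oh_{\beta,q}}{\delta u^\mu}$, hence $\tfrac{\delta\oh_{\beta,q}}{\delta u^\mu}$ up to an additive constant. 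That constant is pinned down because its $\eps^0$-part is the prescribed $\tfrac{\delta\oh^{[0]}_{\beta,q}}{\delta u^\mu}$, while the higher $\eps$-parts of $\tfrac{\delta\oh_{\beta,q}}{\delta u^\mu}$ have positive differential degree and so carry no constant term. Hence all $\oh_{\beta,q}$ are determined.

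\emph{Reconstructing $(\spec)$.} The genus-zero string axiom gives $\oh^{[0]}_{1,0}=\tfrac12\int\eta_{\mu\nu}u^\mu u^\nu\,dx$, so the hypothesis $\oh_{1,0}=\oh^{[0]}_{1,0}$ forces the first flow to be a translation, $\d_{t^1_0}(\spec)^\alpha=\d_x(\spec)^\alpha$; thus every $x$-derivative of $(\spec)$ occurring below can be converted, via the string equation \eqref{eq:string for special}, into a combination of $t^\mu_n$-derivatives. For $k=0$ the dispersionless part $(\spec)^\alpha_{[0]}$ is the topological solution of the principal hierarchy and so is determined by the $\oh^{[0]}_{\beta,q}$. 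Now let $k\ge1$ and assume $(\spec)_{[j]}$ known for $j<k$. Put $S_\mu:=\d_x\tfrac{\delta\oh_{1,1}}{\delta u^\mu}$, so that $\d_{t^1_1}(\spec)^\alpha=\eta^{\alpha\mu}S_\mu[(\spec)]$, and write $[\,\cdot\,]_{[k],\,d}$ for the part of $\eps$-order $k$ and total degree $d$. Projecting the $\eps^k$-part of the dilaton equation \eqref{eq:dilaton for special} onto degree $d$ gives
\[
(k+d)\,\bigl[(\spec)^\alpha_{[k]}\bigr]_{\deg d}=\eta^{\alpha\mu}\,\bigl[\,S_\mu[(\spec)]\,\bigr]_{[k],\,d}.
\]
Write $S_\mu=\sum_{l\ge0}\eps^l S^{[l]}_\mu$ with $S^{[l]}_\mu$ of differential degree $l+1$. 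The contributions of the $l\ge1$ terms to the right-hand side involve $(\spec)$ only at orders $\eps^{<k}$ and are therefore already known. The $l=0$ term, for which $S^{[0]}_\mu=\rho_{\mu\nu}(u_0)\,u^\nu_1$ with $\rho:=\Omega^{[0]}_{1,2;1,0}$ the (derivative-free) density of $\oh^{[0]}_{1,1}$ and $\rho_{\mu\nu}:=\tfrac{\d^2\rho}{\d u^\mu\d u^\nu}$, contributes $\eta^{\alpha\mu}\d_x\bigl(\rho_{\mu\nu}[(\spec)_{[0]}]\,(\spec)^\nu_{[k]}\bigr)$ plus an expression built only from $(\spec)$ at orders $\eps^{<k}$. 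A short computation with the string and dilaton relations for $F_0$ gives $\rho_{\mu\nu}(0)=0$ and $\tfrac{\d^3\rho}{\d u^1\d u^\mu\d u^\nu}(0)=\eta_{\mu\nu}$, the second equality by the unit axiom of the cohomological field theory. Hence, once the single $x$-derivative of $(\spec)_{[k]}$ in the $l=0$ term is eliminated by \eqref{eq:string for special}, the only degree-$d$ contribution of this term still containing $\bigl[(\spec)_{[k]}\bigr]_{\deg d}$ is $\eta^{\alpha\mu}\eta_{\mu\nu}\bigl[(\spec)^\nu_{[k]}\bigr]_{\deg d}=\bigl[(\spec)^\alpha_{[k]}\bigr]_{\deg d}$, every other contribution having degree strictly less than $d$. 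The displayed identity therefore becomes
\[
(k+d-1)\,\bigl[(\spec)^\alpha_{[k]}\bigr]_{\deg d}=\bigl(\text{an expression in already-determined quantities}\bigr),
\]
and since $k+d-1\ge1$ for $k\ge1,\ d\ge1$, while $\bigl[(\spec)^\alpha_{[k]}\bigr]_{\deg 0}=0$ for $k\ge1$ by the initial condition, this determines $\bigl[(\spec)^\alpha_{[k]}\bigr]_{\deg d}$ for all $d$. Running the two inductions to completion produces $(\spec)$, and then the first step produces all $\oh_{\beta,q}$.

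\emph{Where the work is.} The main obstacle is the bookkeeping in the last step: one must check that, after using the string equation to trade every $x$-derivative of $(\spec)$ for $t^\mu_n$-derivatives, the unknown $\bigl[(\spec)_{[k]}\bigr]_{\deg d}$ feeds back into the right-hand side only through the identity matrix, so that the effective multiplier is the nowhere-vanishing integer $k+d-1$ and not some quantity that could hit a forbidden value. This rests precisely on $\oh_{1,1}$ reducing, in differential degree one, to the unit of the cohomological field theory, i.e.\ on $\tfrac{\d^3\rho}{\d u^1\d u^\mu\d u^\nu}(0)=\eta_{\mu\nu}$; the remaining verifications are routine.
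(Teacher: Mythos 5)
Your proposal is correct and takes essentially the same route as the paper's proof: you turn the dilaton equation plus the $t^1_1$-flow into a master recursion for the special solution, use the string equation to eliminate the $x$-derivatives of the unknown, and use the genus-zero string/unit structure of $\oh^{[0]}_{1,1}$ (your identities $\rho_{\mu\nu}(0)=0$ and $\tfrac{\partial^3\rho}{\partial u^1\partial u^\mu\partial u^\nu}(0)=\eta_{\mu\nu}$ are exactly the paper's vanishing of $\tfrac{\partial\Omega^{[0]}_{1,1;\mu,0}}{\partial u^\gamma}$ at $u=0$ and its TRR identity $\eta^{\alpha\mu}\tfrac{\partial\Omega^{[0]}_{1,1;\mu,0}}{\partial u^1}=u^\alpha$) so that the unknown coefficient appears with the invertible factor $k+d-1$, the paper's $i+n-1$ with the induction indexed by $\eps$-order and number of $t$-insertions instead of total degree. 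The closing step, recovering the Hamiltonians from the special solution via injectivity of jet evaluation guaranteed by the string equation and fixing the constants by the dispersionless data and the degree grading, is likewise the paper's final argument.
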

\begin{proof}
First of all, let us recall several properties of the functions~$\Omega^{[0]}_{\alpha,p;\beta,q}$ (see e.g.~\cite{BPS12a,BPS12b}):
\begin{align}
&\frac{\d\Omega^{[0]}_{\alpha,p+1;1,0}}{\d u^\beta}=\Omega^{[0]}_{\alpha,p;\beta,0},\quad p\ge 0,\notag\\
&\frac{\d\Omega^{[0]}_{\alpha,p+1;\beta,q}}{\d u^\gamma}=\Omega^{[0]}_{\alpha,p;\mu,0}\eta^{\mu\nu}\frac{\d\Omega^{[0]}_{\nu,0;\beta,q}}{\d u^\gamma},\quad p\ge 0.\label{eq:TRR-0}
\end{align}
Therefore, we have
$$
\eta^{\alpha\mu}\d_x\frac{\delta\oh_{\alpha,p}^{[0]}}{\delta u^\mu}=\eta^{\alpha\mu}\d_x\frac{\d\Omega_{\alpha,p+1;1,0}^{[0]}}{\d u^\mu}=\eta^{\alpha\mu}\frac{\d\Omega^{[0]}_{\alpha,p;\mu,0}}{\d u^\gamma}u^\gamma_x.
$$
Since the integral $\int_{\oM_{0,3}}\psi_1^p c_{0,3}(e_\alpha\otimes e_\mu\otimes e_\gamma)$ obviously vanishes when $p\ge 1$, we get
\begin{gather}\label{eq:vanishing}
\left.\frac{\d\Omega^{[0]}_{\alpha,p;\mu,0}}{\d u^\gamma}\right|_{u^*=0}=0,\quad\text{if $p\ge 1$}.
\end{gather}

Let us prove now that the special solution~$(\spec)^\alpha$ is uniquely determined by the Hamiltonian~$\oh_{1,1}$ and the Hamiltonians~$\oh^{[0]}_{\beta,q}$ of the principal hierarchy. Since, 
$$
\oh_{1,0}=\oh^{[0]}_{1,0}=\int\frac{1}{2}\eta_{\alpha\beta}u^\alpha u^\beta dx,
$$
we have $\d_x(\spec)^\alpha=\frac{\d(\spec)^\alpha}{\d t^1_0}$. Therefore, it is enough to determine only the coefficients of~$t^{\alpha_1}_{d_1}\ldots t^{\alpha_n}_{d_n}\eps^i$ in~$(\spec)^\alpha$. We will denote these coefficients by $c_{\alpha_1,\ldots,\alpha_n;i}^{d_1,\ldots,d_n;\alpha}$. The principal hierarchy determines the coefficients~$c_{\alpha_1,\ldots,\alpha_n;0}^{d_1,\ldots,d_n;\alpha}$. Let 
$$
\eta^{\alpha\mu}\d_x\frac{\delta\oh_{1,1}}{\delta u^\mu}=\sum_{i\ge 0}P^\alpha_i(u,u_x,\ldots)\eps^i,\quad P^\alpha_i\in\cA_N, \quad\deg P^\alpha_i=i+1.
$$
The dilaton equation~\eqref{eq:dilaton for special} and equation~\eqref{eq:deformed hierarchy} for $\beta=1$ and $q=1$ imply that
\begin{gather}\label{main recursion}
\eps\frac{\d(\spec)^\alpha}{\d\eps}+x\frac{\d(\spec)^\alpha}{\d x}+\sum_{n\ge 0}t^\gamma_n\frac{\d(\spec)^\alpha}{\d t^\gamma_n}=\sum_{i\ge 0}P^\alpha_i(\spec,\spec_x,\ldots)\eps^i.
\end{gather}

Let us prove that this equation allows to compute all the coefficients~$c_{\alpha_1,\ldots,\alpha_n;i}^{d_1,\ldots,d_n;\alpha}$ starting from the coefficients~$c_{\beta_1,\ldots,\beta_m;0}^{l_1,\ldots,l_m;\gamma}$. To be precise, we are going to prove that equation~\eqref{main recursion} allows to express any coefficient~$c_{\alpha_1,\ldots,\alpha_n;i}^{d_1,\ldots,d_n;\alpha}$, $i>0$, in terms of the coefficients 
\begin{gather}
c_{\beta_1,\ldots,\beta_m;j}^{k_1,\ldots,k_m;\gamma},\label{previous coefficients}
\end{gather}
where one of the following two conditions holds:
\begin{align}
&\text{1. } j<i,\label{condition1}\\
&\text{2. } j\le i \text{ and } m<n.\label{condition2}
\end{align}
The coefficient of~$t^{\alpha_1}_{d_1}\ldots t^{\alpha_n}_{d_n}\eps^i$ on the left-hand side of~\eqref{main recursion} is equal to~$(i+n)c_{\alpha_1,\ldots,\alpha_n;i}^{d_1,\ldots,d_n;\alpha}$. Let us look at the coefficient of~$t^{\alpha_1}_{d_1}\ldots t^{\alpha_n}_{d_n}\eps^i$ on the right-hand side of~\eqref{main recursion}. The string equation~\eqref{eq:string for special} implies that
$$
\left.\d_x^d(\spec)^\alpha\right|_{\substack{x=0\\t^*_*=0}}=\delta^{\alpha,1}\delta_{d,1}
$$
and that the coefficient of $t^{\beta_1}_{l_1}\ldots t^{\beta_m}_{l_m}\eps^j, m\ge 1$, in~$\d_x^d(\spec)^\gamma$ is a linear combination of the coefficients~$c_{\beta_1,\ldots,\beta_m;j}^{k_1,\ldots,k_m;\gamma}$. Therefore, the coefficient of~$t^{\alpha_1}_{d_1}\ldots t^{\alpha_n}_{d_n}\eps^i$ in 
$$
\sum_{r\ge 1}P^{\alpha}_r(\spec,\spec_x,\ldots)\eps^r
$$ 
can be expressed in terms of coefficients~\eqref{previous coefficients} with condition~\eqref{condition1}.

We have 
$$
P^\alpha_0=\eta^{\alpha\mu}\frac{\d\Omega^{[0]}_{1,1;\mu,0}}{\d u^\gamma}u^\gamma_x.
$$
From~\eqref{eq:vanishing} it follows that the coefficient of~$t^{\alpha_1}_{d_1}\ldots t^{\alpha_n}_{d_n}\eps^i$ in
$$
\eta^{\alpha\mu}\frac{\d\Omega^{[0]}_{1,1;\mu,0}}{\d u^\gamma}((\spec)_x^\gamma-\delta^{\gamma,1})
$$
can be expressed in terms of coefficients~\eqref{previous coefficients} with condition~\eqref{condition2}. Finally, we compute
$$
\eta^{\alpha\mu}\frac{\d\Omega^{[0]}_{1,1;\mu,0}}{\d u^1}\stackrel{\text{by \eqref{eq:TRR-0}}}{=}\eta^{\alpha\mu}\underbrace{\Omega^{[0]}_{1,0;\nu,0}}_{=\eta_{\nu\theta}u^\theta}\eta^{\nu\rho}\underbrace{\frac{\d\Omega^{[0]}_{\rho,0;\mu,0}}{\d u^1}}_{=\eta_{\rho\mu}}=u^\alpha.
$$
We see that the coefficient of~$t^{\alpha_1}_{d_1}\ldots t^{\alpha_n}_{d_n}\eps^i$ of the left-hand side of~\eqref{main recursion} is equal to~$(i+n)c_{\alpha_1,\ldots,\alpha_n;i}^{d_1,\ldots,d_n;\alpha}$, while the coefficient of this monomial on the right-hand side of~\eqref{main recursion} is equal to $1$ plus a combination of coefficients~\eqref{previous coefficients} with condition~\eqref{condition1} or~\eqref{condition2}. We conclude that the special solution~$(\spec)^\alpha$ is uniquely determined by the Hamiltonians~$\oh_{1,1}$ and~$\oh^{[0]}_{\alpha,p}$. 

Since the dispersionless part $\left.\oh_{\alpha,p}\right|_{\eps=0}$ coincides with the Hamiltonian $\oh^{[0]}_{\alpha,p}$ of the principal hierarchy, we have (see e.g.~\cite{BPS12a})
$$
\left.(\spec)^\alpha\right|_{\eps=0}=\eta^{\alpha\mu}\left.\frac{\d^2 F_0}{\d t^1_0\d t^\mu_0}\right|_{t^1_0\mapsto t^1_0+x}.
$$
Then from the string equation for the potential~$F_0$ it follows that
$$
\left.\d_x^d(\spec)^\alpha\right|_{x=0}=t^\alpha_d+\delta^{\alpha,1}\delta_{d,1}+O(t^2)+O(\eps).
$$
This equation implies that any power series in the variables~$t^\nu_i$ and~$\eps$ can be written as a power series in~$\left(\left.\d_x^d(\spec)^\alpha\right|_{x=0}-\delta^{\alpha,1}\delta_{d,1}\right)$ and $\eps$ in a unique way. Thus, the special solution determines the differential polynomials~$\eta^{\alpha\mu}\d_x\frac{\delta\oh_{\alpha,p}}{\delta u^\mu}$. Since $\left.\frac{\delta\oh_{\alpha,p}}{\delta u^\mu}\right|_{u^*_*=0}=0$, the Hamiltonians~$\oh_{\alpha,p}$ are also uniquely determined. The proposition is proved.
\end{proof}

Let us show now how to use this proposition in order to relate the Dubrovin--Zhang hierarchy to the double ramification hierarchy. Consider an arbitrary semisimple cohomological field theory and the associated Dubrovin--Zhang and the double ramification hierarchies. We denote by~$w^\alpha$ the dependant variables of the Dubrovin--Zhang hierarchy, by~$\oh_{\alpha,p}$ the Hamiltonians and by~$K$ the hamiltonian operator. Consider some Miura transformation $u^\alpha\mapsto w^\alpha(u;u_x,\ldots;\eps)$.
\begin{proposition}\label{proposition:DR and DZ}
Suppose that the Miura transformation~$u^\alpha\mapsto w^\alpha(u;u_x,\ldots;\eps)$ satisfies the following conditions:
\begin{enumerate}
\item $\frac{\d w^\alpha}{\d u^1}=\delta^{\alpha,1}$;

\item The Miura transform of the standard operator $\eta\d_x$ coincides with the operator~$K$;

\item The Miura transform of the Hamiltonian $\og_{1,1}$ coincides with the Hamiltonian $\oh_{1,1}$: $\og_{1,1}[w]=\oh_{1,1}$.
\end{enumerate}
Then the Miura transform of the double ramification hierarchy coincides with the Dubrovin--Zhang hierarchy.
\end{proposition}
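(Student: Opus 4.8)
The plan is to deduce the statement from Proposition~\ref{proposition:reconstruction}, applied to two hamiltonian hierarchies in the variables $u^1,\dots,u^{r-1}$ with Poisson operator $\eta\d_x$: the double ramification hierarchy $\{\og_{\alpha,d}\}$ itself, and the pullback $\{\oh_{\alpha,d}[u]\}$ of the Dubrovin--Zhang hierarchy under the inverse of the Miura transformation $u^\alpha\mapsto w^\alpha(u;u_x,\dots;\eps)$. By condition~(2) the Miura transform of $\eta\d_x$ is $K$, so the pullback of $K$ is $\eta\d_x$ and both objects are of the type covered by Proposition~\ref{proposition:reconstruction}; since Miura transformations are Poisson, the $\oh_{\alpha,d}[u]$ commute pairwise with respect to $\eta\d_x$; since the Miura transformation is the identity at $\eps=0$ and the dispersionless limit of the $r$-spin Dubrovin--Zhang hierarchy is the principal hierarchy (cf.~\cite{DZ05}), $\left.\oh_{\alpha,d}[u]\right|_{\eps=0}=\oh^{[0]}_{\alpha,d}$; and since $\oh_{1,0}$ generates $x$-translation, a coordinate-free flow, $\oh_{1,0}[u]$ does too for $\eta\d_x$, which (using the $\eps$-grading and that the $\eps^0$-part of $\oh_{1,0}[u]$ is $\oh^{[0]}_{1,0}=\int\frac12\eta_{\mu\nu}u^\mu u^\nu\,dx$) forces $\oh_{1,0}[u]=\oh^{[0]}_{1,0}$. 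The analogous facts for the double ramification hierarchy — commutativity with respect to $\eta\d_x$, $\left.\og_{\alpha,d}\right|_{\eps=0}=\oh^{[0]}_{\alpha,d}$, and $\og_{1,0}=\oh^{[0]}_{1,0}$ — were established in~\cite{Bur14}.

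It then remains to check that the special solutions of both hierarchies satisfy the string and dilaton equations \eqref{eq:string for special}--\eqref{eq:dilaton for special}. For the double ramification hierarchy the special solution is the string solution: the string equation holds by~\cite{Bur14} and the dilaton equation is Proposition~\ref{proposition:dilaton}. For the Dubrovin--Zhang hierarchy one works with the topological solution $w^{\mathrm{top}}$ in the flat coordinates, attached to the $r$-spin partition function by~\cite{FSZ}; it satisfies \eqref{eq:string for special}--\eqref{eq:dilaton for special}, these being the $L_{-1}$ and $L_0$ Virasoro constraints, and vanishes at the origin. I would then show that the pullback $u^{\mathrm{top}}$ of $w^{\mathrm{top}}$ under the inverse Miura transformation $u^\alpha=w^\alpha+\sum_{k\ge1}\eps^k g^\alpha_k$ (with $\deg g^\alpha_k=k$, see~\eqref{eq:degree condition}) is the special solution of $\{\oh_{\alpha,d}[u]\}$ and inherits both equations. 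The string equation is where condition~(1) enters: writing $S:=\frac{\d}{\d t^1_0}-\sum_{n\ge0}t^\gamma_{n+1}\frac{\d}{\d t^\gamma_n}$ and using $[\d_x,S]=0$ together with the string equation for $w^{\mathrm{top}}$, the chain rule collapses $S\,(u^{\mathrm{top}})^\alpha$ to $\left.\frac{\d u^\alpha}{\d w^1}\right|_{w^{\mathrm{top}}}$, which equals $\delta^{\alpha,1}$ because condition~(1) gives $\frac{\d u^\alpha}{\d w^1}=\delta^{\alpha,1}$ for the inverse transformation; together with the vanishing at the origin this also pins down the initial value $\left.(u^{\mathrm{top}})^\alpha\right|_{t^*_*=0}=\delta^{\alpha,1}x$. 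The dilaton equation instead uses homogeneity: the identity $\sum_{\beta,m}m\,w^\beta_m\frac{\d u^\alpha}{\d w^\beta_m}=\eps\frac{\d u^\alpha}{\d\eps}$ (a consequence of $\deg g^\alpha_k=k$) together with the dilaton equation for $w^{\mathrm{top}}$ and $\d_x^m(x\,\d_x f)=x\,\d_x^{m+1}f+m\,\d_x^m f$ collapses $O\,(u^{\mathrm{top}})^\alpha$ to $0$, for the operator $O$ of Proposition~\ref{proposition:dilaton}. Hence both hierarchies satisfy the hypotheses of Proposition~\ref{proposition:reconstruction}.

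Finally, conditions~(2) and~(3) give $\og_{1,1}[w]=\oh_{1,1}$, hence $\og_{1,1}=\oh_{1,1}[u]$ after the inverse Miura transformation, and the two hierarchies have the same dispersionless parts $\oh^{[0]}_{\alpha,d}$; Proposition~\ref{proposition:reconstruction} then forces $\og_{\alpha,d}=\oh_{\alpha,d}[u]$ for all $\alpha$ and $d$, and applying the Miura transformation $u\mapsto w$ (which carries $\eta\d_x$ to $K$) gives $\og_{\alpha,d}[w]=\oh_{\alpha,d}$ for all $\alpha,d$, i.e.\ the Miura transform of the double ramification hierarchy is the Dubrovin--Zhang hierarchy. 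I expect the main obstacle to be the second paragraph: carefully establishing that the Miura pullback of the Dubrovin--Zhang topological solution is the special solution of the pulled-back hierarchy and still obeys \emph{both} the string and the dilaton equations — this is where condition~(1) and the homogeneity of the Miura transformation are used, and where one must keep track of the non-commutation of $\d_x$ with the scaling operators.
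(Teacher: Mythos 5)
Your proposal is correct and follows essentially the same route as the paper: pull the Dubrovin--Zhang hierarchy back to the $u$-variables via the inverse Miura transformation (condition (2) giving the operator $\eta\d_x$), verify the hypotheses of Proposition~\ref{proposition:reconstruction} on both sides --- for the Dubrovin--Zhang side via the topological solution, whose string and dilaton equations transfer to its pullback using condition (1) and the homogeneity \eqref{eq:degree condition}, and for the double ramification side via the string solution, \cite{Bur14} and Proposition~\ref{proposition:dilaton} --- and then conclude from condition (3) and the coincidence of the dispersionless parts. The extra bookkeeping you spell out (commutativity, $\oh_{1,0}[u]=\oh^{[0]}_{1,0}$, dispersionless limits) is left implicit in the paper but is consistent with its argument.
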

\begin{proof}
The Dubrovin--Zhang hierarchy has the so-called topological solution that is defined by (see~\cite{DZ05})
$$
(\wtop)^\alpha(x,t^*_*;\eps):=\eta^{\alpha\mu}\left.\frac{\d^2 F}{\d t^1_0\d t^\mu_0}\right|_{t^1_0\mapsto t^1_0+x},
$$
where $F(t^*_*;\eps)$ is the potential of the cohomological field theory. From the string and the dilaton equations for~$F$ it is easy to see that the topological solution satisfies the string and the dilaton equations~\eqref{eq:string for special} and~\eqref{eq:dilaton for special}, and also the initial condition
$$
\left.(\wtop)^\alpha\right|_{t^*_*=0}=\delta^{\alpha,1}x.
$$
Consider the inverse Miura transformation~$w^\alpha\mapsto u^\alpha(w;w_x,\ldots;\eps)$. Let 
$$
(u^\mathrm{top})^\alpha(x,t^*_*;\eps):=\left.u^\alpha\right|_{w^\gamma_n=\d_x^n(\wtop)^\gamma}.
$$
It is easy to see that the power series~$(u^\mathrm{top})^\alpha$ satisfies the dilaton equation~\eqref{eq:dilaton for special}. Note that condition (1) is equivalent to the condition $\frac{\d u^\alpha}{\d w^1}=\delta^{\alpha,1}$, which easily implies that the power series~$(u^\mathrm{top})^\alpha$ satisfies the string equation~\eqref{eq:string for special}. From condition~(2) it follows that the inverse Miura transform of the Dubrovin--Zhang hierarchy satisfies the assumptions of Proposition~\ref{proposition:reconstruction}.

On the other hand, by~\cite[Lemma 4.4]{Bur14} the genus-zero part of the double ramification hierarchy coincides with the genus-zero part of the Dubrovin--Zhang hierarchy, and, by \cite[Lemma 4.7]{Bur14} and Proposition~\ref{proposition:dilaton} the string solution~$(u^{\str})^\alpha$ satisfies the string and the dilaton equations~\eqref{eq:string for special} and~\eqref{eq:dilaton for special}. Therefore, the double ramification hierarchy also satisfies the assumptions of Proposition~\ref{proposition:reconstruction}. As a result, condition~(3) and Proposition~\ref{proposition:reconstruction} imply that the Dubrovin--Zhang and the double ramification hierarchy are equivalent by our Miura transformation.
\end{proof}

\subsection{Proof of Theorem~\ref{theorem:main}}\label{subsection:final proof}

We just check all the conditions from Proposition~\ref{proposition:DR and DZ}. Condition~(1) is obvious. For condition (2) we use the formulas for the operator~$K^{\text{$r$-spin}}$ from Section~\ref{subsection:examples}. In order to check condition~(3) we use our computations of the Hamiltonian~$\og^{\text{$r$-spin}}_{1,1}$ from~Propositions~\ref{proposition:DR 3-spin},~\ref{proposition:DR 4-spin} and~\ref{proposition:DR 5-spin} and the formulas for the Hamiltonian~$\oh^{\text{$r$-spin}}_{1,1}$ from Section~\ref{subsection:examples}. Then the theorem follows from Proposition~\ref{proposition:DR and DZ}.


\section{Quantization of the $r$-spin Dubrovin--Zhang hierarchy for $r=3,4,5$}\label{section:quantization}

A quantization of the $2$-spin Dubrovin--Zhang hierarchy was constructed in~\cite{BR15}. In this section we obtain a quantization of the $r$-spin Dubrovin--Zhang hierarchy for $r=3,4,5$. This is a consequence of Theorem~\ref{theorem:main} and the construction of~\cite{BR15}.

\subsection{Quantization of the double ramification hierarchy}

Consider an arbitrary cohomological field theory $c_{g,n}\colon V^{\otimes n}\to H^\even(\oM_{g,n};\mbC)$.
In~\cite{BR15} a natural quantization of the associated double ramification hierarchy was constructed. Let us briefly recall it. 

First of all, we have to introduce the Weyl algebra~$\mathfrak{W}_N$. It is formed by (power series in $\hbar$ with coefficients that are) power series in~$p^\alpha_k$, $k\leq 0$, with coefficients that are polynomials in~$p^\alpha_k$, $k>0$, with $\alpha=1,\ldots,N$. The product rule is described as follows: representing two power series in the ``normal form'', i.e. with all variables with negative or zero subscripts appearing on the left of all variables with positive subscripts,
$$
f=\sum_{g\geq 0} \sum_{n\geq 0}\ \sum_{k_1,\ldots, k_n \leq 0} p^{\alpha_1}_{k_1}\ldots p^{\alpha_n}_{k_n} f^{\alpha_1,\ldots,\alpha_n}_{k_1,\ldots,k_n;g}(p_{k>0})\hbar^g,
$$
$$
g=\sum_{g \geq 0} \sum_{n\geq 0}\ \sum_{k_1,\ldots, k_n\leq 0} p^{\alpha_1}_{k_1}\ldots p^{\alpha_n}_{k_n} g^{\alpha_1,\ldots,\alpha_n}_{k_1,\ldots,k_n;g}(p_{k>0}) \hbar^g,
$$
where $f^{\alpha_1,\ldots,\alpha_n}_{k_1,\ldots,k_n;g}(p_{>0})$ and $g^{\alpha_1,\ldots,\alpha_n}_{k_1,\ldots,k_n;g}(p_{>0})$ are polynomials, one obtains the product $f\star g$ by commuting the $p_{\leq 0}$ variables of $g$ with the $p_{k>0}$ variables of $f$ using $[p^\alpha_k,p^\beta_j]=i \hbar k \eta^{\alpha \beta}\delta_{k+j,0}$. Thanks to polynomiality of the coefficients, this process is well-defined and produces another element of the same Weyl algebra~$\mathfrak{W}_N$.

For $1\le\alpha\le N$ and $d\ge 0$ define the following elements of the algebra~$\fW_N[[\eps]]$:
\begin{gather}\label{eq:quantum Hamiltonians}
\overline G_{\alpha,d}:=\sum_{\substack{g\ge 0,n\ge 0\\2g-1+n>0}}\frac{(i \hbar)^g}{n!}\sum_{a_1+\ldots+a_n=0}\left(\int_{\DR_g\left(0,a_1,\ldots,a_n\right)}\Lambda\left(\frac{-\eps^2}{i \hbar}\right) \psi_1^d c_{g,n+1}\left(e_\alpha\otimes \bigotimes_{i=1}^n e_{\alpha_i}\right)\right)p^{\alpha_1}_{a_1}\ldots p^{\alpha_n}_{a_n},
\end{gather}
where $\Lambda\left(\frac{-\eps^2}{i \hbar}\right):=1+\left( \frac{-\eps^2}{i \hbar}\right) \lambda_1+\ldots + \left(\frac{-\epsilon^2}{i\hbar}\right)^g \lambda_g$. It is easy to see that
$$
\left.\overline G_{\alpha,d}\right|_{\hbar=0}=\og_{\alpha,d}.
$$
In~\cite{BR15} it is proved that the elements $\overline G_{\alpha,d}$ mutually commute. Therefore, the construction~\eqref{eq:quantum Hamiltonians} gives a quantization of the double ramification hierarchy.

\subsection{$r$-spin theory for $r=3,4,5$}

Consider now the $r$-spin theory. In the $3$-spin case, by Theorem~\ref{theorem:main}, the Dubrovin--Zhang hierarchy coincides with the double ramification hierarchy. Therefore, the construction of~\cite{BR15} immediately gives a quantization of the $3$-spin Dubrovin--Zhang hierarchy.

Suppose now that $r$ is equal to~$4$ or~$5$. Quantization of the Dubrovin--Zhang hierarchy in these cases is slightly different because the hamiltonian structure doesn't coincide with the standard one. However, this is easy to handle. Note that the hamiltonian operator $K^{\text{$r$-spin}}=((K^{\text{$r$-spin}})^{\alpha\beta})$ has the form
\begin{gather}\label{eq:good form}
(K^{\text{$r$-spin}})^{\alpha\beta}=\sum_{i\ge 0}(K^{\text{$r$-spin}})^{\alpha\beta}_i\eps^i\d_x^{i+1},
\end{gather}
where $(K^{\text{$r$-spin}})^{\alpha\beta}_i$ are constants. It is very easy to quantize the Poisson structure corresponding to this operator. Let $\widetilde p^\alpha_n$ be the Fourier components of the fields $w^\alpha(x)$:
$$
w^\alpha(x)=\sum_{n\in\mbZ}\widetilde p^\alpha_n e^{inx}.
$$
Introduce a deformed algebra $\widetilde\fW_N[[\eps]]_{K^{\text{$r$-spin}}}$ as follows. As a vector space it coincides with the space~$\fW_N[[\eps]]$, but with the variables $p^\alpha_n$ replaced by $\widetilde p^\alpha_n$. We endow the space~$\widetilde\fW_N[[\eps]]_{K^{\text{$r$-spin}}}$ with a product rule using the following deformed commutation relation:
$$
[\widetilde p^\alpha_m,\widetilde p^\beta_n]_{K^{\text{$r$-spin}}}:=\hbar\delta_{m+n,0}\sum_{j\ge 0}\eps^j(im)^{j+1}(K^{\text{$r$-spin}})^{\alpha\beta}_j.
$$
It is clear that this gives a quantization of the Poisson structure on~$\hLambda_{w^1,\ldots,w^{r-1}}^{[0]}$ defined by the operator~$K^{\text{$r$-spin}}$. The Miura transformation from Theorem~\ref{theorem:main} induces an isomorphism $f_r\colon\widetilde\fW_N[[\eps]]_{K^{\text{$r$-spin}}}\to\fW_N[[\eps]]$ that is given by
$$
r=4\colon
\left\{
\begin{aligned}
f_4(\tp^1_n)=&p^1_n-\frac{\eps^2}{96}n^2p^3_n,\\
f_4(\tp^2_n)=&p^2_n,\\
f_4(\tp^3_n)=&p^3_n;
\end{aligned}
\right.
\qquad\qquad
r=5\colon
\left\{
\begin{aligned}
f_5(\tp^1_n)=&p^1_n-\frac{\eps^2}{60}n^2p^3_n,\\
f_5(\tp^2_n)=&p^2_n-\frac{\eps^2}{60}n^2 p^4_n,\\
f_5(\tp^3_n)=&p^3_n,\\
f_5(\tp^4_n)=&p^4_n.
\end{aligned}
\right.
$$
We see that that for the $4$ and $5$-spin theories the elements $f_r^{-1}(\overline G_{\alpha,d})\in\widetilde\fW[[\eps]]_{K^{\text{$r$-spin}}}$ define a quantization of the Dubrovin--Zhang hierarchy.



\bibliographystyle{plain} 

\bibliography{bibliographytowardadescription} 


\end{document}